\newtheorem{lemma}{Lemma}
\newtheorem{theorem}[lemma]{Theorem}
\newtheorem{proposition}[lemma]{Proposition}
\newtheorem{corollary}[lemma]{Corollary}
\newtheorem*{convention}{Convention}
\theoremstyle{definition}
\theoremstyle{remark}
\newtheorem{example}[lemma]{Example}
\newtheorem{remark}[lemma]{Remark}
\newcommand{\AND}{\mathbin{\mathrm{AND}}}
\newcommand{\OPT}{\mathbin{\mathrm{OPT}}}
\newcommand{\bound}{\mathrm{bound}}
\newcommand{\UNION}{\mathbin{\mathrm{UNION}}}
\newcommand{\MINUS}{\mathbin{\mathrm{MINUS}}}
\newcommand{\FILTER}{\mathbin{\mathrm{FILTER}}}
\newcommand{\BIND}{\mathbin{\mathrm{BIND}}}
\newcommand{\VALUES}{{\textstyle\mathop{\mathrm{VALUES}}}}
\newcommand{\SELECT}{{\textstyle \mathop{\mathrm{SELECT}}}}
\newcommand{\EXISTS}{\mathop{\mathrm{EXISTS}}}
\newcommand{\dom}[1]{\mathrm{dom}(#1)}
\newcommand{\var}[1]{\mathrm{var}(#1)}
\newcommand{\varp}[2]{\mathrm{var}^{#1}(#2)}
\newcommand{\semm}[2]{\llbracket #1 \rrbracket_{#2}}
\newcommand{\sem}[1]{\semm{#1}G}
\newcommand{\adom}[1]{\mathrm{adom}(#1)}
\newcommand{\vadom}[1]{\mathit{adom}_{#1}}
\newcommand{\myfrag}[1]{\text{SPARQL}(#1)}
\newcommand{\frageq}{\myfrag{\bound,\allowbreak =,\allowbreak \neq_c}}
\newcommand{\fragneq}{\myfrag{\bound,\allowbreak \neq,\allowbreak \neq_c}}
\newcommand{\frageqc}{\myfrag{=_c}}
\newcommand{\fragnegbound}{\myfrag{\neg\bound}}
\newcommand{\frageqneq}{\myfrag{=,\allowbreak \neq}}
\begin{document}
\title{On the satisfiability problem \\ for SPARQL
patterns\thanks{This
work has been funded by grant G.0489.10 of
the Research Foundation Flanders (FWO).}}
\author{Xiaowang Zhang \\
\normalsize School of Computer Science and Technology \\
\normalsize Tianjin University\thanks{School
of Computer Science and Technology, Tianjin University, No.92 Weijin
Road, Nankai District, Tianjin 300072, P.R. China,
\protect \url{xiaowang.zhang@tju.edu.cn};
work performed while at Universiteit Hasselt.}
\and
Jan Van den Bussche \\ \normalsize Universiteit Hasselt\thanks{Databases and
Theoretical Computer Science, Universiteit Hasselt, Martelarenlaan
42, 3500 Hasselt, Belgium, \protect
\url{jan.vandenbussche@uhasselt.be}}
\and
Fran\c cois Picalausa \\ \normalsize \protect \url{fpicalausa@gmail.com}}

\maketitle

\begin{abstract}

The satisfiability problem for SPARQL patterns is undecidable in
general, since SPARQL~1.0 can express the relational algebra.
The goal of this paper is to delineate the boundary of
decidability of satisfiability in terms of the constraints
allowed in filter conditions.  The classes of constraints
considered are bound-constraints, negated bound-constraints,
equalities, nonequalities, constant-equalities, and
constant-nonequalities.  The main result of the paper can be
summarized by saying that, as soon as inconsistent filter
conditions can be formed, satisfiability is undecidable.  The key
insight in each case is to find a way to emulate the set
difference operation.  Undecidability can then be obtained from a
known undecidability result for the algebra of binary relations
with union, composition, and set difference.  When no
inconsistent filter conditions can be formed, satisfiability is
decidable by syntactic checks on bound variables and on the use
of literals.  Although the problem is shown to be NP-complete, it
is experimentally shown that the checks can be implemented
efficiently in practice.  The paper also points out that
satisfiability for the so-called `well-designed' patterns can be
decided by a check on bound variables and a check for
inconsistent filter conditions.

\end{abstract}

\section{Introduction}

The Resource Description Framework \cite{RDFprimer} is a
popular data model for information in the Web.  RDF represents
information in the form of directed, labeled graphs.  The
standard query language for RDF data is SPARQL \cite{sparql1.1}.
The current version~1.1 of SPARQL extends SPARQL~1.0
\cite{sparql} with important features such as aggregation and
regular path expressions \cite{chili_yotta}.
Other features, such as negation and
subqueries, have also been added, but mainly for efficiency
reasons, as they were already expressible, in a more involved manner, in version~1.0.  Hence, it is still relevant to study the
fundamental properties of SPARQL~1.0.  In this paper, we follow
the elegant formalization of SPARQL~1.0 by Arenas, Gutierrez and
P\'erez \cite{perez_sparql_tods,semanticsparql} which is
eminently suited for theoretical investigations.

The fundamental problem that we investigate is that of
\emph{satisfiability} of SPARQL patterns.  A pattern is called
satisfiable if there exists an RDF graph under which the pattern
evaluates to a nonempty set of mappings.  For any query language,
satisfiability is clearly one of the essential properties one
needs to understand if one wants to do automated reasoning.
Since SPARQL patterns can emulate relational algebra expressions
\cite{ag_expsparql,polleres_sparqldatalog,chile_sparql_pods}, and
satisfiability for relational algebra is undecidable
\cite{ahv_book}, the general satisfiability problem for SPARQL is
undecidable as well.

Whether or not a pattern is satisfiable depends mainly on the
filter operations appearing in the pattern.
The goal of this paper is
to precisely delineate the decidability of SPARQL fragments that
are defined in terms of the constraints that can be used as
filter conditions.  The six basic classes of constraints we
consider are bound-constraints; equalities; constant-equalities;
and their negations.  In this way, fragments of SPARQL can be
constructed by specifying which kinds of constraints are allowed
as filter conditions.  For example, in the fragment $\fragneq$,
filter conditions can only be bound constraints, nonequalities,
and constant-nonequalities.

Our main result states that the only fragments for which
satisfiability is decidable are the two fragments $\frageq$ and
$\fragneq$ and their subfragments.  Consequently, as soon as
either negated bound-constraints, or constant-equalities, or
combinations of equalities and nonequalities are allowed, the
satisfiability problem becomes undecidable.  Each undecidable
case is established by showing how the set difference operation
can be emulated.  This was already known using negated
bound-constraints \cite{ag_expsparql,chile_sparql_pods}; so we
show it is also possible using constant-equalities, and using
combinations of equalities and nonequalities, but in no other
way.  Undecidability can then be obtained from a known
undecidability result for the algebra of binary relations with
union, composition, and set difference \cite{tony_da_arxiv}.

In the decidable cases, satisfiability can be decided by
syntactic checks on bound variables and the use of literals.
Although the problem is shown to be NP-complete, it
is experimentally shown that the checks can be implemented
efficiently in practice.

At the end of the paper we look at a well-behaved class of
patterns known as the `well-designed' patterns
\cite{perez_sparql_tods}.  We observe that satisfiability of
well-designed patterns can be decided by combining the check on
bound variables with a check for inconsistent filter conditions.

This paper is further organized as follows. In the next section,
we introduce syntax and semantics of SPARQL patterns and
introduce the different fragments under consideration.
Section~\ref{secsat} introduces the satisfiability problem and
shows satisfiability checking for the fragments $\frageq$ and
$\fragneq$.  Section~\ref{secund} shows undecidability for the
fragments $\fragnegbound$, $\frageqc$, and $\frageqneq$.
Section~\ref{secwell} considers well-designed patterns.

Section~\ref{secexp} reports on experiments that test our
decision methods in practice.
In Section~\ref{sec1.1} we briefly discuss how our results extend to
the new operators that have been added to SPARQL~1.1.  We
conclude in Section~\ref{seconcl}.

\section{SPARQL and fragments}

In this section we recall the syntax and semantics of SPARQL
patterns, closely following the core SPARQL formalization given
by Arenas, Gutierrez and P\'erez
\cite{perez_sparql_tods,semanticsparql,chile_sparql_pods}.\footnote{The
cited works are seminal works on the semantics and complexity of
SPARQL patterns, but they do not investigate the satisfiability
of SPARQL patterns which is the main topic of the present paper.
The cited works also extensively discuss minor deviations between
the formalization and real SPARQL,
and why these differences are
inessential for the purpose of formal investigation.}
The semantics we use is set-based, whereas the semantics of real SPARQL
is bag-based.  However, for satisfiability (the main topic of
this paper), it makes no difference whether we use a set or bag
semantics \cite[Lemma~1]{schmidt_sparqloptim}.

In this section we will also define the language fragments defined in
terms of allowed filter conditions, which will form the object of
this paper.

\subsection{RDF graphs}

Let ${I}$, ${B}$, and ${L}$ be infinite sets of \emph{IRIs},
\emph{blank nodes} and \emph{literals}, respectively.  These
three sets are pairwise disjoint.  We denote the union $I \cup B
\cup L$ by $U$, and elements of $I \cup L$ will be referred to as
\emph{constants}.  Note that blank nodes are not constants.

A triple $(s, p, o) \in ({I}\cup {B}) \times {I} \times U$
is called an \emph{RDF triple}.  An \emph{RDF graph}
is a finite set of RDF triples.

\subsection{Syntax of SPARQL patterns}

Assume furthermore an infinite set $V$ of \emph{variables},
disjoint from $U$.  The convention in SPARQL is that variables
are written beginning with a question mark, to distinguish them
from constants. We will follow this convention in this paper.

SPARQL \emph{patterns} are inductively defined as follows.
\begin{itemize}
\item Any triple from $({I}\cup {L} \cup {V}) \times ({I} \cup
{V}) \times ({I} \cup {L} \cup {V}$) is a pattern (called a
\emph{triple pattern}).
\item If $P_{1}$ and $P_{2}$ are patterns, then so are
the following:
\begin{itemize}
\item
$P_{1} \UNION P_{2}$;
\item
$P_{1} \AND P_{2}$;
\item
$P_{1} \OPT P_{2}$.
\end{itemize}
\item
If $P$ is a pattern and $C$ is a
constraint (defined next),
then $P \FILTER C$ is a pattern; we call $C$ the
\emph{filter condition}.

Here, a \emph{constraint} can have one of the six following forms:
\begin{enumerate}
\item
\emph{bound-constraint:} $\bound(?x)$
\item
\emph{negated bound-constraint:} $\neg \bound(?x)$
\item
\emph{equality:} $?x={?y}$
\item
\emph{nonequality:} $?x\neq {?y}$ with $?x$ and $?y$ distinct
variables
\item
\emph{constant-equality:} $?x = c$ with $c$ a constant
\item
\emph{constant-nonequality:} $?x \neq c$
\end{enumerate}
\end{itemize}

We do not need to consider conjunctions and disjunctions in
filter conditions, since conjunctions can be expressed by
repeated application of filter, and disjunctions can be expressed
using UNION\@.  Hence, by going to disjunctive normal form, any
predicate built using negation, conjunction, and
disjunction is indirectly supported by our
language.

Moreover, real SPARQL also allows blank nodes in triple patterns.
This feature has been omitted from the formalization
\cite{perez_sparql_tods,semanticsparql,chile_sparql_pods},
because blank nodes in triple patterns can be equivalently
replaced by variables.

\subsection{Semantics of SPARQL patterns}

The semantics of patterns is defined in terms of sets of
so-called \emph{solution mappings}, hereinafter simply called
\emph{mappings}.  A solution mapping is a total function
$\mu : S \to U$ on some finite set $S$ of variables. 
We denote the domain $S$ of $\mu$ by $\dom \mu$.

We make use of the following convention.
\begin{convention} \label{covve}
For any mapping $\mu$ and any constant $c \in I \cup L$, we 
agree that $\mu(c)$ equals $c$ itself.  
\end{convention}
In other words, mappings are by default extended to constants according to the
identity mapping.

Now given a graph $G$ and a pattern $P$, we define the semantics
of $P$ on $G$, denoted by $\sem P$, as a set of mappings, in
the following manner.
\begin{itemize}
\item
If $P$ is a triple pattern $(u, v, w)$, then $$ \sem P :=
\{\mu : \{u, v, w\} \cap V \to U \mid (\mu(u),\mu(v),\mu(w)) \in G\}. $$
This definition relies on Convention~\ref{covve} formulated above.
\item
If $P$ is of the form $P_1 \UNION P_2$, then
$$ \sem P := \sem {P_1} \cup \sem {P_2}. $$
\item
If $P$ is of the form $P_1 \AND P_2$, then
$$ \sem P  := \sem {P_1}  \Join \sem {P_2} , $$
where, for any two sets of mappings $\Omega_1$ and $\Omega_2$,
we define
$$ \Omega_1 \Join \Omega_2 =
\{\mu_1 \cup \mu_2 \mid
\text{$\mu_1 \in \Omega_1$ and
$\mu_2 \in \Omega_2$ and $\mu_1 \sim \mu_2$} \}. $$
Here, two mappings $\mu_1$ and $\mu_2$ are called
\emph{compatible}, denoted by $\mu_1 \sim \mu_2$, if
they agree on the intersection of their domains, i.e.,
if for every variable $?x \in \dom {\mu_1} \cap \dom {\mu_2}$, we have
$\mu_1(?x) = \mu_2(?x)$.  Note that when $\mu_1$ and $\mu_2$ are
compatible, their union $\mu_1 \cup \mu_2$ is a well-defined
mapping; this property is used in the formal definition above.
\item
If $P$ is of the form $P_1 \OPT P_2$, then
$$ \sem P  := (\sem {P_1}  \Join \sem {P_2} )
\cup (\sem {P_1}  \smallsetminus \sem {P_2} ),
$$
where, for any two sets of mappings $\Omega_1$ and $\Omega_2$,
we define
$$ \Omega_1 \smallsetminus \Omega_2 =
\{ \mu_1 \in \Omega_1 \mid \neg \exists \mu_2 \in \Omega_2 :
\mu_1 \sim \mu_2\}. $$
\item
Finally, if $P$ is of the form $P_1 \FILTER C$, then
$$ \sem P  := \{\mu \in \sem {P_1}  \mid \mu \models C\} $$
where the satisfaction of a constraint $C$ by a mapping $\mu$,
denoted by $\mu \models C$, is defined as follows:
\begin{enumerate}
\item
$\mu \models \bound(?x)$ if $?x \in \dom \mu$;
\item
$\mu \models \neg \bound(?x)$ if $?x \notin \dom \mu$;
\item
$\mu \models {?x={?y}}$ if $?x,?y \in \dom \mu$ and $\mu(?x)=\mu(?y)$;
\item
$\mu \models {?x\neq {?y}}$ if $?x,?y \in \dom \mu$ and $\mu(?x)\neq\mu(?y)$;
\item
$\mu \models {?x=c}$ if $?x \in \dom \mu$ and $\mu(?x)=c$;
\item
$\mu \models {?x\neq c}$ if $?x \in \dom \mu$ and $\mu(?x)\neq c$.
\end{enumerate}
\end{itemize}

Note that $\mu \models {?x \neq {?y}}$ is not the same as $\mu
\not \models {?x = {?y}}$, and similarly for $\mu \models {?x
\neq c}$.  This is in line with the three-valued logic semantics
for filter conditions used in the official semantics
\cite{semanticsparql}.  For example, if $?x \notin \dom \mu$,
then in three-valued logic $?x=c$ evaluates to $\mathit{error}$
under $\mu$; consequently, also $\neg {?x=c}$ evaluates to
$\mathit{error}$ under $\mu$.  Accordingly, in the semantics
above, we have both $\mu \not \models {?x=c}$ and $\mu \not
\models {?x\neq c}$.

\subsection{SPARQL fragments}

We can form fragments of SPARQL by specifying which of the six
classes of constraints are allowed as filter conditions.  We
denote the class of bound-constraints by `bound', negated
bound-constraints by `$\neg\bound$', equalities by `$=$',
nonequalities by `$\neq$', constant-equalities by `$=_c$', and
constant-nonequalities by `$\neq_c$'.  Then for any subset $F$ of
$\{\bound,\neg \bound,=,\neq,=_c,\neq_c\}$ we can form the
fragment $\myfrag F$.  For example, in the fragment $\frageq$,
filter conditions can only be bound constraints, equalities, and
constant-nonequalities.

\section{Satisfiability: decidable fragments} \label{secsat}

A pattern $P$ is called \emph{satisfiable} if there exists a
graph $G$ such that $\sem P $ is nonempty.  In general,
checking satisfiability is a very complicated, indeed
undecidable, problem.  But for the two fragments $\frageq$ and
$\fragneq$, it will turn out that there are essentially only two
possible reasons for unsatisfiability.

The first possible reason is that the pattern specifies a literal
value in the first position of some RDF triple, whereas RDF
triples can only have literals in the third position.  For
example, using the literal 42, the triple pattern $(42,?x,?y)$ is
unsatisfiable.  Note that literals in the middle position of a triple
pattern are already disallowed by the definition of triple
pattern, so we only need to worry about the first position.

This discrepancy between triple patterns and RDF triples is easy
to sidestep, however.  In the Appendix we show how, without loss
of generality, we may assume from now on that \emph{patterns do
not contain any triple pattern $(u,v,w)$ where $u$ is a literal.}

The second and main possible reason for unsatisfiability is that filter
conditions require variables to be bound together in a way that
cannot be satisfied by the subpattern to which the filter
applies.  For example, the pattern $$ ((?x,a,?y) \UNION
(?x,b,?z)) \FILTER (\bound(?y) \land \bound(?z)) $$ is
unsatisfiable.  Note that bound constraints are not strictly
necessary to illustrate this phenomenon: if in the above example
we replace the filter condition by $?y={?z}$ the resulting pattern
is still unsatisfiable.

We next prove formally that satisfiability for patterns in $\frageq$ and
$\fragneq$ is effectively decidable, by catching the
reason for unsatisfiability described above.
Note also that the two fragments can not
be combined, since satisfiability for $\myfrag{=,\neq}$ is
undecidable as we will see in the next Section.

\subsection{Checking bound variables} \label{seccheck}

To perform bound checks on variables, we associate to every
pattern $P$ a set $\Gamma(P)$ of schemes,
where a \emph{scheme} is simply a set of variables, in the
following way.\footnote{We define $\Gamma(P)$ for general patterns,
not only for those belonging to the fragments considered in this
Section, because we will make another use of $\Gamma(P)$ in
Section~\ref{secwell}.}

\begin{itemize}
\item
If $P$ is a triple pattern $(u,v,w)$, then $\Gamma(P) :=
\{\{u,v,w\}\cap V\}$.
\item
$\Gamma(P_1 \UNION P_2) := \Gamma(P_1) \cup \Gamma(P_2)$.
\item
$\Gamma(P_1 \AND P_2) := \{S_1 \cup S_2 \mid S_1 \in \Gamma(P_1)$
and $S_2 \in \Gamma(P_2)\}$.
\item
$\Gamma(P_1 \OPT P_2) := \Gamma(P_1 \AND P_2) \cup \Gamma(P_1)$.
\item
$\Gamma(P_1 \FILTER C) := \{S \in \Gamma(P_1) \mid S \vdash 
C\}$, where $S \vdash C$ is defined as follows:
\begin{itemize}
\item
If $C$ is of the form $\bound(?x)$ or $?x=c$ or $?x \neq c$, then
$S \vdash C$ if $?x \in S$;
\item
If $C$ is of the form $?x={?y}$
or $?x\neq{?y}$, then
$S \vdash C$ if $?x,?y \in S$;
\item
$S \vdash {\neg \bound(?x)}$ if $?x \notin S$.
\end{itemize}
\end{itemize}

\begin{example} \label{exgammaunion}
Consider the pattern
$$ P = (?x,p,?y) \OPT ( (?x,q,?z) \UNION (?x,r,?u)). $$
For the subpattern $P_1 = 
(?x,q,?z) \UNION (?x,r,?u)$ we have $\Gamma(P_1) =
\{\{?x,?z\},\allowbreak \{?x,?u\}\}$.  Hence, $\Gamma((?x,p,?y) \AND P_1) =
\{\{?x,?y,?z\},\{?x,?y,?u\}\}$. We conclude that $\Gamma(P) =
\{\{?x,?y\},\{?x,?y,?z\},\{?x,?y,?u\}\}$.
\end{example}

\begin{example} \label{exgamma}
For another example,
consider the pattern
$$ P = ((?x, p, ?y) \OPT ((?x, q, ?z)
\FILTER {?y = ?z})) \FILTER {?x \neq c}. $$
We have $\Gamma(?x,q,?z)
= \{\{?x,?z\}\}$.  Note that $\{?x,?z\} \not \vdash {?y = ?z}$,
because $?y \notin \{?x,?z\}$.  Hence, for the subpattern $P_1 =
(?x, q, ?z) \FILTER {?y = ?z}$ we have $\Gamma(P_1)=\emptyset$.
For the subpattern $P_2 = (?x,p,?y) \OPT P_1$ we then have
$\Gamma(P_2)=\Gamma(?x,p,?y)=\{\{?x,?y\}\}$.  Since $\{?x,?y\}
\vdash {?x \neq c}$, we conclude that $\Gamma(p) = \{\{?x,?y\}\}$.
\qed
\end{example}

We now establish the main result of this Section.

\begin{theorem} \label{theordecidable}
Let $P$ be a $\frageq$ or $\fragneq$ pattern.  Then $P$ is
satisfiable if and only if\/ $\Gamma(P)$ is nonempty.
\end{theorem}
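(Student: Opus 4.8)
The plan is to establish the two implications separately. The forward implication, that satisfiability forces $\Gamma(P)$ to be nonempty, rests on a structural invariant that holds for \emph{all} patterns; the backward implication is the substantial one, and it is where the restriction to $\frageq$ and $\fragneq$ is used. Throughout I rely on the standing assumption that no triple pattern of $P$ carries a literal in its first position: without it the statement is simply false, since $(42,?x,?y)$ has nonempty $\Gamma$ yet is unsatisfiable.

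For the forward direction I would prove, by induction on the structure of $P$, that for every graph $G$ and every $\mu \in \sem{P}$ one has $\dom\mu \in \Gamma(P)$. The triple-pattern, $\UNION$, $\AND$ and $\OPT$ clauses copy the corresponding clauses in the definition of $\Gamma$ almost verbatim, using $\dom{\mu_1 \cup \mu_2} = \dom{\mu_1} \cup \dom{\mu_2}$ for the join. The only clause that consults the constraints is $\FILTER$, where I would check, for each admissible form of $C$, that $\mu \models C$ implies $\dom\mu \vdash C$; this is immediate from comparing the two definitions. A witness $(G,\mu)$ for satisfiability then gives $\dom\mu \in \Gamma(P)$, so $\Gamma(P) \neq \emptyset$. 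I will reuse this invariant, in the contrapositive form $\Gamma(Q) = \emptyset \Rightarrow \semm{Q}{G} = \emptyset$, inside the converse.

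For the converse, fix a scheme in $\Gamma(P)$ and build one canonical graph on which $P$ fires. Choose a single assignment $\nu$ of the variables of $P$ to fresh IRIs, none of them among the constants of $P$: take $\nu$ injective for $\fragneq$, and for $\frageq$ take it constant on each class of the equivalence $\approx$ generated by the equalities $?x = ?y$ occurring in $P$ but injective across classes (the two prescriptions coincide once one sets $\approx$ to equality in the $\fragneq$ case). Let $G^*$ instantiate \emph{every} triple pattern occurring in $P$ under $\nu$; the freshness of the values together with the no-leading-literal assumption makes $G^*$ a legal RDF graph. Call $\mu$ a $\nu$-\emph{mapping} if it agrees with $\nu$ on its domain, and note that any two $\nu$-mappings are compatible and that their union is again a $\nu$-mapping. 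I would then prove, by induction on a subpattern $Q$, the strengthened statement that for every $S \in \Gamma(Q)$ there is a $\nu$-mapping $\mu \in \semm{Q}{G^*}$ with $\dom\mu \supseteq S$. Instantiating this at $Q = P$ produces a mapping in $\semm{P}{G^*}$ and hence satisfiability.

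The two deliberate weakenings in this statement are what make the induction go through. Requiring only $\dom\mu \supseteq S$ rather than $\dom\mu = S$, and tracking $\nu$-mappings, trivialises $\AND$: the two mappings handed up by induction are automatically compatible and their union has the required domain. The $\FILTER$ step is where the fragment hypothesis is spent; from $S \vdash C$ one reads off $\mu \models C$ for any $\nu$-mapping with $\dom\mu \supseteq S$, because the truth of $C$ depends only on the values of $\nu$ on the variables of $C$, all of which lie in $S$: bound-constraints and constant-nonequalities hold since fresh IRIs differ from every constant, equalities hold since the equality forces the two variables into one $\approx$-class, and nonequalities hold by injectivity of $\nu$. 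It is exactly the absence of $\neg\bound$, of constant-equalities, and of mixed equalities and nonequalities that lets the single assignment $\nu$ satisfy all filters at once. The case I expect to require the most care is $\OPT$: because of the difference operator the semantics is non-monotone, so when $S \in \Gamma(Q_1)$ the mapping supplied for $Q_1$ may fail to survive into $\semm{Q_1 \OPT Q_2}{G^*}$. Here I would split on whether $\Gamma(Q_2)$ is empty: if it is, the forward invariant gives $\semm{Q_2}{G^*} = \emptyset$, so the left mapping lies in the difference part; if it is not, combining $\nu$-mappings for $Q_1$ and $Q_2$ produces a $\nu$-mapping in the join part with domain $\supseteq S$. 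Either way $\OPT$ retains a suitable $\nu$-mapping, and the induction closes.
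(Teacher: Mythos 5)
Your proposal is correct and follows essentially the same route as the paper: the forward direction is exactly Lemma~\ref{lemoif}, and your backward direction reproduces the structure of Lemmas~\ref{lemifeq} and~\ref{lemifneq} --- a canonical graph instantiating every triple pattern of $P$ under a single variable assignment, the strengthened induction hypothesis $\dom\mu \supseteq S$, compatibility for free in the AND case, and the same case split at OPT on whether $\Gamma(Q_2)$ is empty, using the forward invariant in contrapositive. The only cosmetic difference is that you unify the two fragments with one assignment (constant on the equivalence classes generated by the equalities in $P$, injective across classes, avoiding the constants of $P$), whereas the paper uses a globally constant map for $\frageq$ and an injective map for $\fragneq$; both choices satisfy the filter conditions for the same reasons.
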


The only-if direction of Theorem~\ref{theordecidable} is the easy
direction and is given by the following Lemma~\ref{lemoif}.
Note that this lemma holds for general patterns; it can be
straightforwardly proven by induction on the structure of $P$.

\begin{lemma} \label{lemoif}
Let $P$ be a pattern.  If $\mu \in \sem P $ then there exists $S
\in \Gamma(P)$ such that $\dom \mu = S$.
\end{lemma}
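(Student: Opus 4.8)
The plan is to proceed by structural induction on $P$, reading off the witnessing scheme $S$ directly from the way $\dom\mu$ decomposes in each clause of the semantics. For the base case, suppose $P$ is a triple pattern $(u,v,w)$. By definition every $\mu \in \sem P$ has domain exactly $\{u,v,w\} \cap V$, and since $\Gamma(P) = \{\{u,v,w\} \cap V\}$, taking $S = \{u,v,w\} \cap V$ immediately works. So the base case needs nothing beyond unfolding the two definitions.

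For the inductive step, the three combinator cases are direct from the matching clauses in the definitions of $\sem{\cdot}$ and $\Gamma(\cdot)$. In the $P_1 \UNION P_2$ case, a mapping $\mu \in \sem P$ lies in $\sem{P_1}$ or $\sem{P_2}$, so the induction hypothesis supplies a witness in $\Gamma(P_1) \cup \Gamma(P_2) = \Gamma(P)$. In the $P_1 \AND P_2$ case, $\mu = \mu_1 \cup \mu_2$ with compatible $\mu_i \in \sem{P_i}$; the induction hypothesis gives $S_i = \dom{\mu_i} \in \Gamma(P_i)$, and then $\dom\mu = S_1 \cup S_2 \in \Gamma(P)$. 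The $P_1 \OPT P_2$ case splits according to the union defining $\sem P$: if $\mu$ comes from $\sem{P_1} \Join \sem{P_2}$ we reuse the $\AND$ argument to land in $\Gamma(P_1 \AND P_2)$, and if $\mu$ comes from $\sem{P_1} \smallsetminus \sem{P_2}$ then $\mu \in \sem{P_1}$, so the induction hypothesis lands us in $\Gamma(P_1)$; either way $\dom\mu \in \Gamma(P_1 \AND P_2) \cup \Gamma(P_1) = \Gamma(P)$.

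The $P_1 \FILTER C$ case is where the only real content sits, and it is the step I would treat most carefully. Here $\mu \in \sem P$ means $\mu \in \sem{P_1}$ with $\mu \models C$, so the induction hypothesis already yields $S = \dom\mu \in \Gamma(P_1)$; the task is to verify $S \vdash C$ so that $S$ survives into $\Gamma(P)$. This reduces to a small case analysis showing that the syntactic relation $\vdash$ captures exactly the domain prerequisites built into the semantic relation $\models$: whenever $\mu \models \bound(?x)$, $?x=c$, or $?x \neq c$ we have $?x \in \dom\mu = S$, hence $S \vdash C$; whenever $\mu \models {?x={?y}}$ or $?x\neq{?y}$ we have $?x,?y \in S$; and whenever $\mu \models \neg\bound(?x)$ we have $?x \notin \dom\mu = S$, matching the clause $S \vdash \neg\bound(?x)$. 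No obstacle arises because each definition of $\models$ was chosen to entail precisely the membership condition defining $\vdash$; the argument is routine once this correspondence is spelled out clause by clause.
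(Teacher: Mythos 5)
Your proof is correct and follows exactly the route the paper intends: the paper merely remarks that Lemma~\ref{lemoif} ``can be straightforwardly proven by induction on the structure of $P$,'' and your argument carries out precisely that structural induction, with the FILTER case correctly identified as the only step requiring a clause-by-clause check that $\mu \models C$ with $\dom\mu = S$ forces $S \vdash C$.
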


The if direction of Theorem~\ref{theordecidable} for $\frageq$ is given by
the following Lemma~\ref{lemifeq}.

In the following
we use $\var P$ to denote the set of all variables occurring in
a pattern $P$.\footnote{We
also use the following standard notion of restriction of
a mapping. If $f : X \to Y$ is a total function and $Z \subseteq X$,
then the restriction $f|_Z$ of $f$ to $Z$ is the total function from $Z$
to $Y$ defined by $f|_Z(z)=f(z)$ for every $z \in Z$.  That is,
$f|_Z$ is the same as $f$ but is only defined on the subdomain $Z$.}

\begin{lemma} \label{lemifeq}
Let $P$ be a pattern in $\frageq$.
Let $c \in I$ be a constant that does not appear in any
constant-nonequality filter condition in $P$.
With the constant mapping $\mu : \var P \to \{c\}$,
let $G$ be the RDF graph consisting of all possible triples
$(\mu(u),\mu(v),\mu(w))$ where $(u,v,w)$ is a triple
pattern in $P$.

Then for every $S \in \Gamma(P)$ there exists $S' \supseteq S$
such that $\mu|_{S'}$ belongs to $\sem P$.
\end{lemma}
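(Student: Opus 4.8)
The plan is to prove this by structural induction on the pattern $P$, constructing the required $S'$ as I descend into subpatterns. Since the constant mapping $\mu$ sends everything to the single constant $c$, the graph $G$ contains exactly the triple $(c,c,c)$ whenever $P$ has any triple pattern, and the central idea is that $\mu|_{S'}$ will satisfy every filter condition trivially because all variables map to the same value $c$. The key observation to establish first is that for any triple pattern $(u,v,w)$ in $P$, the restriction $\mu|_{\{u,v,w\}\cap V}$ belongs to $\sem{(u,v,w)}$: indeed $(\mu(u),\mu(v),\mu(w))=(c,c,c)\in G$ by construction (here I use the standing assumption from the Appendix that $u$ is not a literal, so $\mu(u)=c$ is a legal subject). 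This handles the base case, with $S'=S=\{u,v,w\}\cap V$.

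For the inductive step I would treat the four operators in turn, in each case taking $S \in \Gamma(P)$ and using the recursive definition of $\Gamma$ to locate the relevant scheme(s) in the subpatterns, then invoking the induction hypothesis. For $P_1 \UNION P_2$, the scheme $S$ lies in $\Gamma(P_1)$ or $\Gamma(P_2)$, and the induction hypothesis on the corresponding side yields $S'$ directly. For $P_1 \AND P_2$, we have $S = S_1 \cup S_2$ with $S_i \in \Gamma(P_i)$; the induction hypothesis gives $S_i' \supseteq S_i$ with $\mu|_{S_i'} \in \sem{P_i}$, and since both restrictions agree with $\mu$ they are trivially compatible, so $\mu|_{S_1'} \cup \mu|_{S_2'} = \mu|_{S_1' \cup S_2'} \in \sem{P_1} \Join \sem{P_2}$, giving $S' = S_1' \cup S_2'$. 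For $P_1 \OPT P_2$ the scheme $S$ comes either from $\Gamma(P_1 \AND P_2)$, handled exactly as the $\AND$ case, or from $\Gamma(P_1)$, in which case the induction hypothesis on $P_1$ produces a mapping in $\sem{P_1}$ that, being a member of $\sem{P_1} \Join \sem{P_2}$ or of $\sem{P_1} \smallsetminus \sem{P_2}$, lands in $\sem{P_1 \OPT P_2}$ either way.

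The filter case $P_1 \FILTER C$ is where the hypotheses on $c$ and on the fragment $\frageq$ are used, and I expect it to be the main obstacle requiring care. Here $S \in \Gamma(P_1)$ with $S \vdash C$, so the induction hypothesis gives $S' \supseteq S$ with $\mu|_{S'} \in \sem{P_1}$, and I must check $\mu|_{S'} \models C$. Because $\frageq$ permits only the constraints $\bound$, $=$, and $\neq_c$, there are exactly three subcases. If $C$ is $\bound(?x)$, then $?x \in S \subseteq S' = \dom{\mu|_{S'}}$, so the constraint holds. If $C$ is $?x = ?y$, then $?x,?y \in S \subseteq S'$ and $\mu(?x)=c=\mu(?y)$, so equality holds. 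If $C$ is $?x \neq c'$ for some constant $c'$, then $?x \in S'$ and $\mu(?x)=c$; here the hypothesis that $c$ does not appear in any constant-nonequality of $P$ guarantees $c' \neq c$, hence $\mu(?x)=c \neq c'$ and the nonequality is satisfied. This last subcase is precisely why the lemma chooses $c$ to avoid the constants named in nonequality filters, and it is the one point where a naive uniform constant would fail. With all operator cases discharged, the induction is complete and $\mu|_{S'} \in \sem{P}$ as required.
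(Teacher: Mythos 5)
Your base case, UNION, AND, and FILTER cases are correct and follow the paper's proof essentially verbatim, including the one place where the hypotheses on $c$ matter (the subcase $C$ of the form $?x \neq c'$). The genuine gap is in the OPT case, subcase $S \in \Gamma(P_1)$. You assert that the mapping $\mu|_{S_1'} \in \sem{P_1}$ given by the induction hypothesis, ``being a member of $\sem{P_1} \Join \sem{P_2}$ or of $\sem{P_1} \smallsetminus \sem{P_2}$, lands in $\sem{P_1 \OPT P_2}$ either way.'' That disjunction is false in general: a mapping in $\sem{P_1}$ need belong to neither set. Concretely, take $P = (?x,p,?y) \OPT (?x,q,?z)$, so that $G = \{(c,p,c),(c,q,c)\}$, and take $S = \{?x,?y\} \in \Gamma(P_1) \subseteq \Gamma(P)$. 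The mapping $\mu|_{\{?x,?y\}}$ is compatible with the mapping sending $?x$ and $?z$ to $c$, which lies in $\sem{P_2}$, so $\mu|_{\{?x,?y\}} \notin \sem{P_1} \smallsetminus \sem{P_2}$; and it is not in $\sem{P_1} \Join \sem{P_2}$ either, since every element of that join has $?z$ in its domain. Indeed $\sem{P} = \{\mu|_{\{?x,?y,?z\}}\}$ contains no mapping with domain $S$. This is precisely why the lemma is stated with ``$S' \supseteq S$'' rather than ``$S' = S$'': in this subcase the witness must in general be a proper extension, and your argument never produces one. (A smaller slip: $G$ is not just $\{(c,c,c)\}$; triple patterns may contain constants, e.g.\ $(?x,p,?y)$ contributes $(c,p,c)$. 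This does not affect your base case, which only needs $(\mu(u),\mu(v),\mu(w)) \in G$.)

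The repair is the step the paper makes and you omit: distinguish whether $\Gamma(P_2)$ is empty. If $\Gamma(P_2)$ is nonempty, apply the induction hypothesis to $P_2$ to get $S_2'$ with $\mu|_{S_2'} \in \sem{P_2}$; since $\mu|_{S_1'}$ and $\mu|_{S_2'}$ are restrictions of the same mapping $\mu$, they are compatible, so $\mu|_{S_1' \cup S_2'} \in \sem{P_1} \Join \sem{P_2} \subseteq \sem{P}$, and $S' = S_1' \cup S_2' \supseteq S$ works. Note that you cannot instead join $\mu|_{S_1'}$ with an arbitrary compatible element of $\sem{P_2}$: such an element need not be a restriction of $\mu$ (when $P$ mentions constants, $G$ contains terms other than $c$), and then the union would not have the required form $\mu|_{S'}$. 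If instead $\Gamma(P_2)$ is empty, then Lemma~\ref{lemoif} gives $\sem{P_2} = \emptyset$, hence $\sem{P} = \sem{P_1}$ and $S' = S_1'$ suffices. With this subcase filled in, your proof coincides with the paper's.
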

\begin{proof}
By induction on the structure of $P$.
If $P$ is a triple pattern $(u,v,w)$ then $S = \{u,v,w\}\cap V$.  Since
$(\mu|_S(u),\mu|_S(v),\mu|_S(w))=(\mu(u),\mu(v),\mu(w)) \in G$, we have
$\mu|_S \in \sem P $ and we can take $S'=S$.

If $P$ is of the form $P_1 \UNION P_2$, then the claim follows readily
by induction.

If $P$ is of the form $P_1 \AND P_2$, then we have $S=S_1 \cup S_2$
with $S_i \in \Gamma(P_i)$ for $i=1,2$.  By
induction, there exists $S_i' \supseteq S_i$ such that
$\mu|_{S_i'} \in \sem {P_i} $.
Clearly $\mu|_{S_1'} \sim
\mu|_{S_2'}$ since they are restrictions of the same mapping.
Hence $\mu|_{S_1'} \cup \mu|_{S_2'} = \mu_{S_1' \cup S_2'} \in
\sem P$ and we can take $S' = S_1' \cup S_2'$.

If $P$ is of the form $P_1 \OPT P_2$, then there are two possibilities.
\begin{itemize}
\item
If $S \in \Gamma(P_1 \AND P_2)$ then we can reason as in the previous
case.
\item
If $S \in \Gamma(P_1)$ then by induction there exists $S'_1
\supseteq S$ so that $\mu|_{S'_1} \in \sem {P_1}$.
Now there are two further possibilities:
\begin{itemize}
\item
If $\Gamma(P_2)$ is nonempty then by induction there exists some
$S_2'$ so that $\mu|_{S_2'} \in \sem {P_2}$.
We can now reason again as in the case $P_1 \AND P_2$.
\item
Otherwise, by Lemma~\ref{lemoif} we know that $\sem {P_2} $ is
empty.  But then $\sem P = \sem {P_1} $ and we can take $S'
= S'_1$.
\end{itemize}
\end{itemize}

Finally, if $P$ is of the form $P_1 \FILTER C$, then we know that $S
\in \Gamma(P_1)$ and $S \vdash C$.  By induction, there
exists $S' \supseteq S$ such that $\mu|_{S'} \in
\sem {P_1} $.  We show that $\mu|_{S'} \in
\sem P$ by showing that $\mu|_{S'} \models C$.
There are three possibilities for $C$.
\begin{itemize}
\item
If $C$ is of the form $\bound(?x)$, then we know by $S \vdash C$
that $?x \in S'$.  Hence $\mu|_{S'} \models C$.
\item
If $C$ is of the form $?x = {?y}$, then we again know $?x,?y \in S'$,
and certainly $\mu|_{S'} \models C$ since $\mu$ maps everything
to $c$.
\item
If $C$ is of the form $?x \neq d$, then we have
$d \neq c$ by the choice of $c$, so
$\mu|_{S'} \models C$ since $\mu(?x)=c$.
\qedhere
\end{itemize}
\end{proof}

\begin{example}
To illustrate the above Lemma, consider the pattern
$$ P = ((?x,p,?y) \FILTER {?x \neq a}) \OPT ( (?x,q,?z) \UNION (?x,r,?u) ) $$
which is a variant of the pattern
from Example~\ref{exgammaunion}.  As in that example, we have
$\Gamma(P) = \{\{?x,?y\},\{?x,?y,?z\},\{?x,?y,?u\}\}$.
In this case,
the mapping $\mu$ from the Lemma maps $?x$, $?y$, $?z$ and $?u$ to $c$.
The graph $G$ from the Lemma equals
$\{(c,p,c),(c,q,c),(c,r,c)\}$, and $\sem P = \{\mu_1,\mu_2\}$
where $\mu_1 = \mu|_{\{?x,?y,?z\}}$ and
$\mu_2 = \mu|_{\{?x,?y,?u\}}$.  Now consider $S = \{?x,?y\} \in
\Gamma(P)$.  Then for $S'=\{?x,?y,?z\}$ we indeed have $S'
\supseteq S$ and $\mu|_{S'} = \mu_1 \in \sem P$.  Note that in
this example we could also have chosen $\{?x,?y,?u\}$ for $S'$.
\qed
\end{example}

The counterpart to Lemma~\ref{lemifeq} for the fragment
$\fragneq$ is given by the following Lemma, thus settling
Theorem~\ref{theordecidable} for that fragment.

\begin{lemma} \label{lemifneq}
Let $P$ be a pattern in $\fragneq$.
Let $W$ be the set of all constants appearing in a
constant-nonequality filter condition in $P$.  Let $Z \subseteq
I$ be a finite set of constants of the same cardinality as $\var P$, 
and disjoint from $W$.  With $\mu : \var P \to Z$ an arbitrary
but fixed injective mapping,
let $G$ be the RDF graph consisting of
all possible triples $(\mu(u),\mu(v),\mu(w))$
where $(u,v,w)$ is a triple pattern in $P$.

Then for every $S \in \Gamma(P)$
there exists $S' \supseteq S$ such that
$\mu|_{S'}$ belongs to $\sem P $.
\end{lemma}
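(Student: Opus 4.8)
The plan is to prove this by induction on the structure of $P$, exactly paralleling the proof of Lemma~\ref{lemifeq}, since the two statements have identical shape; the only changes are that $\mu$ is now \emph{injective} rather than constant, and that the filter conditions to be verified are of the types $\bound$, $\neq$, and $\neq_c$ instead of $\bound$, $=$, and $\neq_c$. The base case (triple pattern) and the inductive cases for $\UNION$, $\AND$, and $\OPT$ go through verbatim, because none of them uses any property of $\mu$ beyond the fact that $G$ was built so that every instantiated triple pattern lands in $G$, and beyond the fact that any two restrictions of the single global mapping $\mu$ are automatically compatible. In particular, in the $\AND$ and $\OPT$ cases we again use $\mu|_{S_1'} \sim \mu|_{S_2'}$, which holds for the same reason as before: both are restrictions of one and the same function $\mu$.

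The one genuinely different case is $P = P_1 \FILTER C$. As in Lemma~\ref{lemifeq}, from $S \in \Gamma(P)$ we get $S \in \Gamma(P_1)$ and $S \vdash C$, and by induction there is $S' \supseteq S$ with $\mu|_{S'} \in \sem{P_1}$; it then suffices to check $\mu|_{S'} \models C$. For $C = \bound(?x)$, the condition $S \vdash C$ gives $?x \in S \subseteq S'$, so $\mu|_{S'} \models C$ just as before. For $C = {?x \neq c}$ with $c$ a constant, $S \vdash C$ gives $?x \in S'$, and since the image $Z$ of $\mu$ is disjoint from the set $W$ of all constants appearing in constant-nonequalities, we have $\mu(?x) \in Z$ while $c \in W$, hence $\mu(?x) \neq c$ and $\mu|_{S'} \models C$. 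The only novel subcase is the nonequality $C = {?x \neq {?y}}$, where by the definition of the syntax $?x$ and $?y$ are \emph{distinct} variables; here $S \vdash C$ gives $?x, ?y \in S'$, and because $\mu$ is injective and $?x \neq ?y$ we get $\mu(?x) \neq \mu(?y)$, so that $\mu|_{?x}$ and $\mu|_{?y}$ differ and $\mu|_{S'} \models C$.

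The main thing to get right, and the only place where the argument is not a literal copy of the previous one, is precisely this nonequality case, and it is exactly here that the two hypotheses on $\mu$ are exploited: injectivity handles $?x \neq {?y}$ (this is why the definition of nonequality insists that the two variables be distinct, so that injectivity actually yields distinct images), and disjointness of $Z$ from $W$ handles $?x \neq c$. I would therefore organize the writeup so that these two hypotheses are visibly invoked at the two corresponding points, to make clear why the choice of $Z$ in the statement is the right one. No further obstacle is expected: once the filter case is settled, the induction closes and, combined with Lemma~\ref{lemoif} for the converse, this completes the if-direction of Theorem~\ref{theordecidable} for $\fragneq$.
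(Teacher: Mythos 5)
Your proposal is correct and follows essentially the same route as the paper: the paper likewise reuses the induction from Lemma~\ref{lemifeq} unchanged except for the FILTER case, where it invokes injectivity of $\mu$ for $?x \neq {?y}$ and disjointness of $Z$ and $W$ for $?x \neq c$, exactly as you do. The only cosmetic difference is that the paper phrases the induction as being on the height of subpatterns $Q$ of $P$ (keeping $\mu$ and $G$ fixed throughout), which is a slightly cleaner way to state what you are already doing implicitly.
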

\begin{proof}
We prove for every subpattern $Q$ of $P$ that for every $S \in
\Gamma(Q)$ there exists $S' \supseteq S$ such that $\mu|_{S'}
\in \sem Q $.  The proof is by induction on the height of $Q$.
The reasoning is largely the same as in the proof of
Lemma~\ref{lemifeq}.  The only difference is
in the case where $Q$ is of the form
$Q_1 \FILTER C$.  In showing that $\mu_{S'} \models C$, we
now argue as follows for the last two cases:

\begin{itemize}
\item
If $C$ is of the form $?x \neq {?y}$, then $\mu|_{S'} \models C$ since $\mu$ is injective.
\item
If $C$ is of the form $?x \neq c$, then $\mu|_{S'} \models C$ since $Z$ and $W$ are disjoint.
\qedhere
\end{itemize}
\end{proof}

\subsection{Computational complexity}

In this section we show that satisfiability for the decidable
fragments is NP-complete.  Note that this does not immediately
follow from the NP-completeness of SAT, since boolean formulas
are not part of the syntax of the decidable fragments.

Theorem~\ref{theordecidable} implies the following
complexity upper bound:

\begin{corollary}
The satisfiability problem for
$\frageq$ patterns, as well as for $\fragneq$ patterns,
belongs to the complexity class
NP.
\end{corollary}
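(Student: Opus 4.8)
**The plan is to show that satisfiability of a decidable-fragment pattern can be decided by a nondeterministic polynomial-time algorithm.** By Theorem~\ref{theordecidable}, a $\frageq$ or $\fragneq$ pattern $P$ is satisfiable if and only if $\Gamma(P) \neq \emptyset$. So the task reduces to deciding, in NP, whether $\Gamma(P)$ contains at least one scheme. The obvious first idea — compute $\Gamma(P)$ in full and check emptiness — does not obviously run in polynomial time, because the $\AND$ and $\OPT$ rules combine schemes pairwise, and iterating this up a pattern of size $n$ could in principle produce exponentially many schemes in $\Gamma(P)$. This is exactly why we need nondeterminism rather than a direct deterministic computation.

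**The key observation is that membership of a single scheme can be witnessed locally.** Instead of materializing all of $\Gamma(P)$, I would guess, for each node of the parse tree of $P$, one scheme that is to be the contribution of that subpattern to the final scheme at the root. Concretely, the NP-algorithm nondeterministically guesses a function $f$ assigning to each subpattern $Q$ of $P$ a set $f(Q) \subseteq \var Q$, and then verifies in deterministic polynomial time that this assignment is consistent with the inductive definition of $\Gamma$. The verification checks, at each node, that $f(Q)$ is a legitimate member of $\Gamma(Q)$ given the values at $Q$'s children: for a triple pattern, that $f(Q) = \{u,v,w\} \cap V$; for $\UNION$, that $f(Q)$ equals $f$ of one of the two children; for $\AND$, that $f(Q) = f(Q_1) \cup f(Q_2)$; for $\OPT$, that $f(Q)$ is either $f(Q_1) \cup f(Q_2)$ or $f(Q_1)$; and for $\FILTER$, that $f(Q) = f(Q_1)$ and that $f(Q_1) \vdash C$. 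If every node passes its local check, then by induction $f(Q) \in \Gamma(Q)$ for every $Q$, so in particular $f(P) \in \Gamma(P)$, witnessing $\Gamma(P) \neq \emptyset$; conversely, any nonempty $\Gamma(P)$ yields such a consistent assignment by unwinding the inductive definition.

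**The resource bounds are then routine to account for.** The parse tree of $P$ has $O(n)$ nodes, each guessed scheme is a subset of $\var P$ and so has size $O(n)$, making the total guessed certificate of size $O(n^2)$; each local consistency check inspects only a node and its (at most two) children and runs in time polynomial in $n$; the relation $S \vdash C$ is decidable by a trivial membership test. Hence the whole verification is deterministic polynomial time over a polynomial-size guess, placing the problem in NP.

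**The main obstacle is conceptual rather than technical:** one must resist computing $\Gamma(P)$ explicitly and instead recognize that a single successful scheme can be certified by a single top-down choice at each node, which is what makes the guess polynomial in size. Once that insight is in place the argument is a direct structural induction matching the defining clauses of $\Gamma$, and the reader can supply the straightforward details.
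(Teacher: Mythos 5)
Your overall strategy is the paper's: by Theorem~\ref{theordecidable} it suffices to certify, in NP, that $\Gamma(P)$ is nonempty, and the certificate tracks one scheme per subpattern with local consistency checks. But your verification has a genuine flaw: you demand that the guessed assignment $f$ pass its local check at \emph{every} node of the parse tree, and the completeness half of your argument (``any nonempty $\Gamma(P)$ yields such a consistent assignment by unwinding the inductive definition'') is false. The definition of $\Gamma$ discards branches: $\Gamma(P_1 \UNION P_2)=\Gamma(P_1)\cup\Gamma(P_2)$ is nonempty as soon as $\Gamma(P_1)$ is, even if $\Gamma(P_2)=\emptyset$, and likewise $\Gamma(P_1 \OPT P_2)\supseteq\Gamma(P_1)$ no matter what $\Gamma(P_2)$ is. When $\Gamma(P_2)=\emptyset$ there is \emph{no} assignment on $P_2$'s subtree passing all local checks --- that is precisely what emptiness of $\Gamma(P_2)$ means --- so no total certificate exists and your algorithm wrongly rejects. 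A concrete counterexample, already inside $\myfrag{\bound}$: let $P=(?x,p,?y) \UNION ((?u,q,?v) \FILTER \bound(?w))$. Here $\Gamma(P)=\{\{?x,?y\}\}$, so $P$ is satisfiable, yet the leaf check forces $f((?u,q,?v))=\{?u,?v\}$ and $\{?u,?v\}\not\vdash\bound(?w)$, so every certificate fails at that FILTER node. The paper's own Example~\ref{exgamma} gives the analogous counterexample with OPT.

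The repair is to make the certificate \emph{partial}: guess a subtree $T$ of the parse tree containing the root, where at a UNION node in $T$ exactly one (guessed) child belongs to $T$, at an OPT node in $T$ the left child belongs to $T$ and the right child is included only if you choose to treat the OPT as an AND, and at AND and FILTER nodes in $T$ all children belong to $T$; then run your local checks only on the nodes of $T$. Soundness is the same induction, restricted to $T$, and completeness now holds because unwinding $S\in\Gamma(P)$ never requires descending into a discarded branch. This is also how the paper's bottom-up nondeterministic algorithm must be understood: the nondeterministic choices at UNION and OPT determine which subtrees are relevant, and a failed FILTER check should kill only those runs in which that FILTER node is relevant; read literally (``if the check fails, the run is rejected,'' during a pass over the whole tree), the paper's informal description trips over the same subtlety as your proof does.
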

\begin{proof}
By Theorem~\ref{theordecidable}, a
$\frageq$ or $\fragneq$ pattern $P$ is satisfiable if and only if
there exists a scheme in $\Gamma(P)$.  Following the
definition of $\Gamma(P)$, it is clear that there is a
polynomial-time nondeterministic algorithm such that, on
input $P$, each accepting possible run computes a scheme in
$\Gamma(P)$, and such that every scheme in $\Gamma(P)$ is
computed by some accepting possible run.

Specifically,
the algorithm works bottom-up on the syntax tree of $P$ and
computes a scheme for every subpattern.  At every leaf $Q$,
corresponding to a triple pattern in $P$, we compute the unique
scheme in $\Gamma(Q)$.  At every UNION operator we
nondeterministically choose between continuing with the scheme
from the left or from right child.  At every AND operator we
continue with the union of the left and right child schemes.  At
every OPT operator, we nondeterministically choose between
treating it as an AND, or simply continuing with the scheme from
the left.  At every FILTER operation with constraint $C$ we check
for the child scheme $S$ whether $S \vdash C$.  If the check
succeeds, we continue with $S$; if the check fails, the run is
rejected.  When the computation has reached the root of the
syntax tree and we can compute a scheme for the root, the run is
accepting and the computed scheme is the output.
\end{proof}

We next show that satisfiability is actually NP-hard, even for
patterns not using any OPT operators and using only bound
constraints in filter conditions.

\begin{proposition} \label{prophard}
The satisfiability problem for OPT-free patterns in
the fragment $\myfrag{\bound}$ is NP-hard.
\end{proposition}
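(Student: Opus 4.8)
The plan is to reduce from the satisfiability problem for CNF formulas (SAT). Let $\phi = C_1 \wedge \cdots \wedge C_n$ be a CNF formula over propositional variables $x_1,\dots,x_m$. Since the pattern $P$ I will build lies in $\myfrag{\bound}$, which is a subfragment of $\frageq$, Theorem~\ref{theordecidable} lets me replace the semantic satisfiability question by the purely syntactic question of whether $\Gamma(P)$ is nonempty. The whole construction is therefore designed so that the schemes in $\Gamma(P)$ mimic exactly the satisfying assignments of $\phi$.

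Concretely, I would fix a single IRI $a$ and, for each clause $C_j$, a fresh SPARQL variable $?y_j$ playing the role of a \emph{clause-satisfied} indicator. For each propositional variable $x_i$ I build a two-branch gadget $A_i := T_i \UNION F_i$, where $T_i$ is the $\AND$ of the triple patterns $(a,a,?y_j)$ over exactly those clauses $C_j$ in which $x_i$ occurs positively, and $F_i$ the analogous $\AND$ over the clauses in which $x_i$ occurs negatively; an empty conjunction is replaced by the constant triple $(a,a,a)$. Reading off $\Gamma$, one branch of $A_i$ corresponds to setting $x_i$ true and the other to setting it false, and the scheme it contributes is precisely the set of indicators $?y_j$ of the clauses that this literal satisfies. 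Taking $A := A_1 \AND \cdots \AND A_m$ and then $P := A \FILTER \bound(?y_1) \FILTER \cdots \FILTER \bound(?y_n)$ completes the construction; note that $P$ uses neither $\OPT$ nor any filter beyond bound-constraints.

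The correctness argument then computes $\Gamma$ bottom-up. Because $\AND$ takes unions of schemes, $\Gamma(A)$ consists of exactly the sets $\{?y_j \mid C_j \text{ is satisfied by } \alpha\}$ as $\alpha$ ranges over all truth assignments, the branch chosen in each $A_i$ encoding the value $\alpha(x_i)$. Under $\Gamma$ the trailing chain of bound-filters simply intersects this family with $\{S \mid ?y_1,\dots,?y_n \in S\}$, so it retains a scheme iff the corresponding $\alpha$ satisfies every clause. Hence $\Gamma(P)$ is nonempty iff $\phi$ is satisfiable, and by Theorem~\ref{theordecidable} the same equivalence holds for satisfiability of $P$. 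The reduction is plainly computable in polynomial time, since the total size of the $A_i$ is proportional to the number of literal occurrences in $\phi$.

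The step I expect to require the most care is keeping the reduction polynomial. The naive encoding—wrapping a three-way $\UNION$ of copies of the assignment gadget around each clause to realize the clause's disjunction—would triple the pattern at every clause and blow up exponentially. The indicator device is exactly what avoids this: it turns each clause's disjunction into the mere \emph{production} of one variable, so that the conjunction over clauses becomes a linear chain of bound-filters rather than a nest of unions. A secondary point to verify carefully is that no spurious assignment can sneak through: consistency (no $x_i$ being set both true and false) comes for free because the choice is made once, inside $A_i$, and the fact that a single indicator $?y_j$ may be produced by several satisfying literals is harmless precisely because clause satisfaction is monotone, so over-production never falsifies the intended correspondence.
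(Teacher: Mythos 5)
Your proposal is correct and is essentially the paper's own reduction: the paper merely factors it through an intermediate problem (``Nested Set Cover'') which it then instantiates from CNF-SAT, and composing its two steps yields exactly your pattern --- per propositional variable a $\UNION$ of two $\AND$s of clause-indicator triple patterns, conjoined over all variables and capped with a chain of $\bound$ filters, one per clause. The only real difference is in the correctness argument: the paper verifies the reduction semantically by constructing explicit graphs and solution mappings, whereas you invoke Theorem~\ref{theordecidable} to replace satisfiability by nonemptiness of $\Gamma(P)$ and compute $\Gamma$ syntactically, which is legitimate since that theorem is proved independently beforehand and $\myfrag{\bound}$ is a subfragment of $\frageq$.
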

\begin{proof}
We define the problem Nested Set Cover as follows:
\begin{description}
\item[Input:]  A finite set $T$ and a finite set $E$ of sets of
subsets of $T$.  (So, every element of $E$ is a set of subsets of
$T$.)
\item[Decide:] Whether for each element $e$ of $E$ we can choose
a subset $S_e$ in $e$, so that $\bigcup_{e \in E} S_e = T$.
\end{description}

Let us first describe how the above problem can be reduced in
polynomial time to the satisfiability problem at hand.
Consider an input $(T,E)$ for Nested Set Cover.  Without
loss of generality we may assume that $T$ is a set of variables
$\{?x_1,?x_2,\dots,?x_n\}$.  Fix some
constant $c$.  For
any subset $S$ of $T$, we can make a pattern $P_S$ by taking the
AND of all $(x,c,c)$ for $x \in S$.  Now for a set $e$ of subsets
of $T$, we can form the pattern $P_e$ by taking the UNION of all
$P_S$ for $S \in e$.  Finally, we form the pattern $P_E$ by
taking the AND of all $P_e$ for $e \in E$.

Now consider the following pattern which we denote by $P_{(T,E)}$:
$$ P_E
\FILTER {\bound(?x_1)}
\FILTER {\bound(?x_2)}
\ldots
\FILTER {\bound(?x_n)} $$

We claim that $P_{(T,E)}$
is satisfiable if and only if $(T,E)$
is a yes-instance for Nested Set Cover.  To see the only-if
direction, let $G$ be a graph such that $\sem{P_{(T,E)}}$ is
nonempty, i.e., has as an element some solution mapping $\mu$.
Then in particular $\mu \in \sem{P_E}$.  Hence,
for every $e \in E$ there exists $\mu_e \in \sem{P_e}$ such that
$\mu = \bigcup_{e \in E} \mu_e$.  Since $P_e$ is the UNION of all
$P_S$ for $S \in e$, for each $e \in E$ there exists $S_e \in e$
such that $\mu_e \in \sem{P_{S_e}}$.  Since $P_{S_e}$ is the AND
of all $(x,c,c)$ for $x \in S_e$, it follows that
$\dom{\mu_e}=S_e$.  Hence, since $\dom \mu = \bigcup_{e\in E}
\dom{\mu_e}$, we have $\dom \mu = \bigcup_{e \in E} S_e$.
However, by the bound constraints in the filters applied in
$P_{(T,E)}$, we also have $\dom \mu = \{?x_1,\dots,?x_n\} = T$.
We conclude that $T = \bigcup_{e \in E} S_e$ as desired.

For the if-direction, assume that for each $e \in E$ there exists
$S_e \in e$ such that $T = \bigcup_{e \in E} S_e$.  Consider
the singleton graph $G = \{(c,c,c)\}$.  For any subset $S$ of
$T$, let $\mu_S : S \to \{c\}$ be the constant solution mapping
with domain $S$.  Clearly, $\mu_{S} \in \sem{P_S}$, so $\mu_{S_e}
\in \sem{P_e}$ for every $e \in E$.  All the $\mu_S$ map to the
same constant, so they are all compatible.  Hence, for $\mu =
\bigcup_{e \in E} \mu_{S_e}$, we have $\mu \in \sem{P_E}$.  Since
$\dom \mu = \bigcup_{e\in E} \dom{\mu_{S_e}} = \bigcup_{e \in E}
S_e = T = \{?x_1,\dots,?x_n\}$, the mapping $\mu$ satisfies every
constraint $\bound(?x_i)$ for $i=1,\dots,n$.  We conclude that
$\mu \in \sem{P_{(E,T)}}$ as desired.

It remains to show that Nested Set Cover is NP-hard.
Thereto we
reduce the classical CNF-SAT problem.  Assume given a boolean
formula $\phi$ in CNF,
so $\phi$ is a conjunction of clauses, where each clauses is a
disjunction of literals (variables or negated variables).  We
construct an input $(T,E)$ for Nested Set Cover as follows.
Denote the set of variables used in $\phi$ by $W$.

For $T$ we take the set of clauses of $\phi$.
For any variable $x \in W$, consider the set ${\rm Pos}_x$ consisting of all
clauses that contain a positive occurrence of $x$, and the set
${\rm Neg}_x$ consisting of all clauses that contain a negative
occurrence of $x$.  Then we define $e_x$ as the pair $\{{\rm
Pos}_x,{\rm Neg}_x\}$.

Now $E$ is defined as the set $\{e_x \mid x \in W\}$.
It is clear that $\phi$ is satisfiable if and only if the
constructed input is a yes-instance for Nested Set Cover.  Indeed,
truth assignments to the variables correspond to selecting either
${\rm Pos}_x$ or ${\rm Neg}_x$ from $e_x$ for each $x \in W$.
\end{proof}

\section{Undecidable fragments} \label{secund}

In this Section we show that the two decidable fragments $\frageq$ and
$\fragneq$ are, in a sense, maximal.  Specifically, the
three minimal fragments not subsumed by one of these two
fragments are $\fragnegbound$, $\frageqneq$, and $\frageqc$.  The
main result of this Section is:

\begin{theorem}
Satisfiability is undecidable for $\fragnegbound$ patterns,
for $\frageqneq$ patterns, and for $\frageqc$ patterns.
\end{theorem}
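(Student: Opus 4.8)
The plan is to reduce from the satisfiability (nonemptiness) problem for the algebra of binary relations with union, composition, and set difference, which is known to be undecidable~\cite{tony_da_arxiv}. Fix relation symbols $R_1,\dots,R_k$ and distinct IRIs $p_1,\dots,p_k$, and encode an assignment of binary relations by an RDF graph $G$ in which the pair $(a,b)$ belongs to $R_i$ exactly when $(a,p_i,b)\in G$. For an algebra expression $E$ I would construct, by induction on $E$ and separately for each of the three fragments, a pattern $P_E$ having two distinguished variables $?s,?t$ together with the faithfulness invariant: for every $G$, the set of restrictions to $\{?s,?t\}$ of the mappings in $\sem{P_E}$ equals the relation denoted by $E$ under $G$, and every other variable of $P_E$ is a fresh ``local'' variable, so that syntactically independent subexpressions share no variables besides $?s,?t$. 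Granting the invariant, $E$ is satisfiable if and only if $P_E$ is, and undecidability transfers. Because the invariant is stated at the level of pairs, it will let me avoid any explicit projection operator.

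The atomic, union, and composition cases are routine and fragment-independent. An atom $R_i$ becomes the triple pattern $(?s,p_i,?t)$; a union $E_1\cup E_2$ becomes $P_{E_1}\UNION P_{E_2}$; and a composition $E_1;E_2$ becomes $P_{E_1}[?t\mapsto ?m]\AND P_{E_2}[?s\mapsto ?m]$ for a fresh middle variable $?m$. Since the invariant constrains only the $\{?s,?t\}$-restriction, the variable $?m$ may simply ride along in the domain and need not be projected away: compatibility in all later $\AND$ and $\OPT$ steps is decided solely by $?s,?t$, as every other variable is fresh and local. The whole difficulty is therefore concentrated in set difference, for which the basic mechanism is the idiom ``$\OPT$ followed by a test that a flag variable is unbound'': in $P_{E_1}\OPT B$, where $B$ is a copy of $P_{E_2}$ augmented so as to bind, on every one of its mappings, a fresh flag variable $?w$, a mapping of $P_{E_1}$ acquires $?w$ precisely when its $(?s,?t)$-pair lies in $E_2$; discarding the mappings that carry $?w$ then realizes $E_1\setminus E_2$. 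For $\fragnegbound$ the discarding is done directly by $\FILTER (\neg\bound(?w))$, recovering the construction already known from~\cite{ag_expsparql,chile_sparql_pods}.

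For $\frageqc$ and $\frageqneq$ the unbound-test is unavailable and must be simulated using the permitted constraints. My approach is to convert ``$?w$ is unbound'' into ``$?w$ carries a designated default value'' by a second optional step: after $P_{E_1}\OPT B$ I apply $\OPT F$, where $F$ is a gadget binding $?w$ to a default constant $c_0$ distinct from the value $c_1$ that $B$ installs on matches. Since a mapping already carrying $?w=c_1$ is incompatible with $F$, while a mapping with $?w$ unbound is extended by $F$, after this step matched mappings have $?w=c_1$ and unmatched mappings have $?w=c_0$. In $\frageqc$ the unmatched ones are then selected by the constant-equality $\FILTER (?w=c_0)$. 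In $\frageqneq$ constants may not appear in filter conditions, so instead I keep an auxiliary anchor variable $?a$ (installed by $P_{E_1}$ and forced to the default value) and select by the variable equality $\FILTER (?w=?a)$; in this fragment I also realize the ``match'' test inside $B$ by variable equalities $?s=?s'$ and $?t=?t'$ against a primed copy of $P_{E_2}$, rather than by interface sharing, since only variable-to-variable comparisons are allowed.

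The step I expect to be the main obstacle is binding the flag and default variables robustly, and then proving both directions of the reduction by induction. The RDF graph is existentially quantified, so a triple pattern introduced to bind $?w$ or $?a$ succeeds only on the triples the adversary happens to supply; I must arrange $P_E$ so that a witnessing graph exists whenever $E$ is satisfiable, yet so that no ``defective'' graph, one missing the intended flag triples, can make $P_E$ satisfiable while $E$ is not, since such a graph would subtract too few pairs and break soundness. This is most delicate for $\frageqc$, where only constant-equalities are available and therefore both the matching of pairs and the selection of the unmatched branch must be forced through reserved marker constants in dedicated triple positions; establishing that the marker constants cannot be abused by a stray graph, and that the faithfulness invariant is preserved through nesting, is the crux. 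Once the three difference gadgets are verified, undecidability for all three fragments follows from~\cite{tony_da_arxiv}.
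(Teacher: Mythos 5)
Your overall architecture---induction on algebra expressions with a two-variable faithfulness invariant, pair-matching through shared interface variables, and an OPT-plus-flag gadget for difference---is exactly the paper's approach, and your $\fragnegbound$ case coincides with the known construction the paper also uses. (One small caveat there: with several relation symbols your flag pattern must be able to bind $?w$ whenever the graph is nonempty in \emph{any} predicate, so it must be a union over all $p_i$ and both positions; alternatively just take $k=1$, which is all the cited undecidability result requires.) The real problems are in the other two fragments. The issue you defer as ``the crux''---stray graphs missing the marker triples---is not a verification detail that a more careful induction can settle: your faithfulness invariant, quantified over \emph{every} graph $G$, is simply false for the gadgets you describe, and the reduction as stated is unsound. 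In $\frageqc$, the branch $B$ subtracts a pair only if $?w$ can be bound to $c_1$, i.e.\ only if $c_1$ occurs in the active domain of $G$, while the default gadget $F$ needs $c_0$ there. Take the unsatisfiable expression $R_1-R_1$ and the graph $G=\{(c_0,p_1,c_0)\}$: then $\sem{B}=\emptyset$, nothing is subtracted, every solution of $P_{R_1}$ acquires $?w=c_0$ through $F$ and passes the final filter, so $P_{R_1-R_1}$ is satisfiable although $R_1-R_1$ is not. The analogous failure occurs in $\frageqneq$, where subtracting a pair needs two \emph{distinct} active-domain values (one for $?w$, one for the anchor), hence fails on one-element graphs. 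The fix genuinely requires changing the problem you reduce from, which is what the paper does: it introduces $ab$-satisfiability (respectively two-satisfiability), argues these variants are still undecidable because DA-expressions commute with isomorphisms (the single one-element relation being checkable separately), claims the invariant \emph{only} for graphs whose active domain contains $a$ and $b$ (respectively two elements), and conjoins to the top-level pattern a gadget such as $(\vadom{?u} \AND \vadom{?u'}) \FILTER {?u=a} \FILTER {?u'=b}$ that forces every witnessing graph to be non-defective. Without this (or an equivalent) move, the direction ``$P_E$ satisfiable $\Rightarrow E$ satisfiable'' fails.

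Second, your $\frageqneq$ gadget is broken for a more basic, semantic reason. You propose to realize the match test ``by variable equalities $?s={?s'}$ and $?t={?t'}$ against a primed copy of $P_{E_2}$, rather than by interface sharing.'' Under the semantics used in the paper (and in SPARQL), a filter is evaluated on the mappings of the pattern it is applied to, and an equality involving an unbound variable is never satisfied. Inside $B$ the variables $?s,?t$ are unbound (they belong to $P_{E_1}$), so such filters make $\sem{B}=\emptyset$ and the OPT never subtracts anything. If instead you apply the equalities \emph{after} the OPT, then, since $B$ now shares no variables with $P_{E_1}$, every mapping of $B$ is compatible with every mapping of $P_{E_1}$; the difference part of OPT is empty whenever $\sem{B}\neq\emptyset$, and the filtered result consists of pairs in $E_1\cap E_2$ rather than $E_1-E_2$. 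The pair matching must go through OPT's compatibility test, i.e., $P_{E_2}$ must share exactly $?s$ and $?t$ with $P_{E_1}$; this costs no filter at all and is available in every fragment. Equalities and nonequalities should be reserved for the flag mechanism, which is precisely how the paper proceeds: inside the OPT, freshly bound active-domain variables filtered by $?u\neq{?u'}$; outside, a re-binding of $?u,?u'$ and the inconsistent filter $?u={?u'}$.
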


We will first present
the proof for $\fragnegbound$; after that we explain how the proof
can be adapted for the other two fragments.

\subsection{$\fragnegbound$}

Our approach is to reduce
from the satisfiability problem for the algebra of finite binary
relations with union, difference, and composition
\cite{tony_da_arxiv}.  This algebra is also called the Downward
Algebra and denoted by DA\@.  The expressions of DA are defined
as follows.  Let $R$ be an arbitrary fixed binary relation symbol.  
\begin{itemize}
\item
The symbol $R$ is a DA-expression.
\item
If $e_1$ and $e_2$ are DA-expressions, then so are $e_1 \cup
e_2$, $e_1 - e_2$, and $e_1 \circ e_2$.
\end{itemize}

Semantically, DA-expressions represent binary queries on binary
relations, i.e., mappings from binary relations to binary
relations.  Let $J$ be a binary relation.  For DA-expression $e$,  
we define the binary relation $e(J)$ inductively as follows:
\begin{itemize}
\item
$R(J) = J$;
\item
$(e_1 \cup e_2)(J) = e_1(J) \cup e_2(J)$;
\item
$(e_1 - e_2)(J) = e_1(J) - e_2(J)$ (set difference);
\item
$(e_1 \circ e_2)(J) = \{(x,z) \mid \exists y : (x,y) \in e_1(J)$
and $(y,z) \in e_2(J)\}$.
\end{itemize}

A DA-expression is called \emph{satisfiable} if there exists a
finite binary relation $J$ such that $e(J)$ is nonempty.

\begin{example}
An example of a DA-expression is
$e=(R \circ R)-R$.  If $J$ is the binary relation
$\{(a,b),(b,c),(a,c),(c,d)\}$ then $e(J) =
\{(b,d),(a,d)\}$.
An example of an unsatisfiable DA
expression is $(R \circ R - R) \circ R - R \circ R \circ R$.
\qed
\end{example}

We recall the following result.  It is actually well known
\cite{andreka_memoir} that
relational composition together with union and complementation
leads to an undecidable algebra; the following result simplifies
matters by showing that undecidability already holds for expressions
over a single relation symbol and using set difference instead of
complementation.  The following result has been proven
by reduction from the universality problem for context-free grammars.
\begin{theorem}[\cite{tony_da_arxiv}]
The satisfiability problem for DA-expressions is undecidable.
\end{theorem}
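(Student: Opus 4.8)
The plan is to reduce from the \emph{non-universality} problem for context-free grammars: given a CFG $G$ over an alphabet $\Sigma$, decide whether there is a word $w \in \Sigma^*$ with $w \notin L(G)$. This problem is undecidable, and a finite binary relation is existentially quantified in the definition of DA-satisfiability, so the two $\exists$-statements match up: I would construct, from $G$, a DA-expression $e_G$ that is satisfiable if and only if $G$ is non-universal. The conceptual backbone is the classical undecidability argument (as in \cite{andreka_memoir}) in which terminals are modelled by atomic relations, concatenation by composition $\circ$, and the choice of production by union $\cup$; the point of \cite{tony_da_arxiv} is to carry this out using only the single relation symbol $R$ and set difference in place of complementation.

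First I would put $G$ into a normal form with productions of the shapes $A \to BC$ and $A \to a$, so that derivability has the usual CYK characterisation: writing $D$ for the least set of triples $(A,i,j)$ closed under the base rules (for $A \to a$ with the $i$-th letter equal to $a$) and the composition rules (for $A \to BC$, if $(B,i,k)$ and $(C,k,j)$ are in $D$ then so is $(A,i,j)$), one has $w \in L(G)$ exactly when $(S,0,n) \in D$, where $S$ is the start symbol and $n=|w|$. The key observation is that non-derivability admits a locally checkable certificate: $(S,0,n) \notin D$ holds if and only if there is a set $U$ of triples that contains all base facts, is closed under the composition rules, and excludes $(S,0,n)$ (necessity is witnessed by $U=D$; sufficiency follows because any such $U$ contains the least closed set $D$). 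Thus the existentially chosen relation $J$ will encode a candidate word as an $R$-path on positions $0,\dots,n$ together with, for each terminal and nonterminal symbol, a binary relation on positions playing the role of $U_a$ and $U_A$.

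Since DA allows only the single symbol $R$, the next step is to pack all of these relations into one $J$ by means of disjoint gadgets of marker vertices, and to write, for each symbol $X$, a fixed DA-expression $\mathit{ext}_X$ that extracts the intended relation on positions from $J$. With the extraction expressions in hand, each closure condition $U_B \circ U_C \subseteq U_A$ becomes the emptiness of $\mathit{ext}_B \circ \mathit{ext}_C - \mathit{ext}_A$, each base condition becomes a similar inclusion tying the relations to the encoded word, and the exclusion condition $(S,0,n) \notin U_S$ becomes the \emph{non}-emptiness of an expression that produces the source--sink pair $(0,n)$ minus $\mathit{ext}_S$. The final expression $e_G$ must be assembled so that it is nonempty precisely when $J$ is a faithful encoding, all base and closure inclusions hold, and the exclusion holds; correctness then runs in both directions, taking $U=D$ to build a witnessing $J$ from a missing word, and reading a missing word off any satisfying $J$.

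The main obstacle, and the reason the result is delicate, is twofold. The first difficulty is making the single-symbol encoding \emph{faithful}: the extraction expressions must recover exactly the intended position relations, and the construction must rule out ``junk'' edges in $J$ that could spuriously satisfy the closure rules or fabricate derivations. The second, sharper difficulty is that DA-satisfiability tests only a \emph{single} expression for non-emptiness, whereas the correctness argument needs a conjunction of emptiness (inclusion) constraints to hold \emph{together with} one non-emptiness goal; since the Downward Algebra has neither a converse nor a universal/top relation, folding all the constraint violations into the single output, so that any violation forces emptiness while a valid certificate yields a nonempty answer, is the crux of the construction. I expect this folding step, rather than the CYK certificate or the grammar normal form, to carry the real weight of the proof.
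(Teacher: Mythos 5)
A point of calibration first: the paper does not prove this theorem at all --- it imports it from \cite{tony_da_arxiv}, and the only thing it discloses about that proof is that it proceeds ``by reduction from the universality problem for context-free grammars.'' Your choice of reduction source (non-universality of a CFG, the formulation whose existential quantifier lines up with the existential quantifier in DA-satisfiability) therefore agrees in outline with the cited work, and your co-CYK certificate is sound: a set $U$ of triples containing the base facts, closed under the composition rules, and excluding $(S,0,n)$ exists if and only if $w \notin L(G)$, by exactly the least-fixpoint argument you give. So the skeleton is the right one.

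The gap is that the two steps you yourself call ``the crux'' are precisely where the theorem lives, and neither is carried out; moreover, the sketch as literally written hits a concrete wall even before the folding problem starts. DA has no Kleene star, so every expression has bounded composition depth: on the word path itself no fixed expression can produce the pair $(0,n)$ for unbounded $n$, and the certificate relations $U_X$ need not contain that pair either --- indeed the whole point is that $(0,n) \notin U_S$. Hence the encoding $J$ must contain explicit long-range gadgets: a designated span edge marking the endpoints of the word, and prefix/suffix shortcut edges if constraint violations are to be transported onto the goal pair; each such gadget is then one more thing the expression has to police, feeding back into your ``faithfulness'' difficulty. Likewise, the folding needs an actual mechanism --- for instance, subtracting from the goal expression terms of the shape $\mathit{pre} \circ (\mathit{ext}_B \circ \mathit{ext}_C - \mathit{ext}_A) \circ \mathit{post}$ that carry any closure violation onto the goal pair --- followed by a soundness argument that extracts a word $w \notin L(G)$ from an \emph{arbitrary} satisfying $J$, not merely from well-formed encodings; this is where essentially all the work of \cite{tony_da_arxiv} is spent. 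Since you explicitly defer both steps (``I expect this folding step \ldots to carry the real weight of the proof''), what you have is a correct identification of the strategy and of its obstacles, not a proof of the theorem.
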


We are now ready to formulate the reduction from the
satisfiability problem for DA to the satisfiability problem for
$\fragnegbound$.

\begin{lemma} \label{reductionlemma}
Let $r \in I$ be an arbitrary fixed constant.
For any binary relation $J$, let $G_J$ be the RDF graph
$\{(c,r,d) \mid (c,d) \in J\}$.
Then for every DA-expression $e$ there exists a $\fragnegbound$ pattern
$P_e$ with the following properties:
\begin{enumerate}
\item
there exist two distinct fixed variables $?x$ and $?y$ such that for every RDF graph $G$ and every $\mu \in \sem {P_e}$, $?x$ and $?y$ belong to $\dom \mu$;
\item
for every binary relation $J$, we have $$ e(J) =
\{(\mu(?x),\mu(?y)) \mid \mu \in \semm {P_e}{G_J}\}; $$
\item
for every RDF graph $G$, we have $\sem {P_e} = \semm {P_e}{G^r}$,
where $G^r := \{(u,v,w) \in G \mid v=r\}$.
\end{enumerate}
\end{lemma}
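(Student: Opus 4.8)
The plan is to proceed by induction on the structure of the DA-expression $e$, building $P_e$ compositionally from the four expression formers. To make the induction go through I would prove a slightly stronger statement: for every DA-expression $e$ and every ordered pair of distinct variables $(s,t)$ there is a pattern $P$ whose distinguished variables are $s,t$ and which satisfies the three properties with $?x,?y$ replaced by $s,t$; moreover I would maintain the invariant that every triple pattern occurring in $P$ has the constant $r$ in its predicate position, and that the auxiliary variables of $P$ (those other than $s,t$) may be renamed to avoid any prescribed finite set. The freedom to rename the distinguished pair is what the composition step needs, and the freshness of auxiliary variables is what the difference step needs. Property~3 is then essentially free: once every triple pattern uses predicate $r$, a routine structural induction shows that $\sem{P}$ depends only on the $r$-labelled triples of $G$, i.e.\ $\sem{P}=\semm{P}{G^r}$; so I would dispatch Property~3 once and for all from this invariant and concentrate on Properties~1 and~2.

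I would handle the three easy cases first. For $e=R$ take $P:=(s,r,t)$; since $(\mu(s),r,\mu(t))\in G_J$ iff $(\mu(s),\mu(t))\in J$, the set of pairs $(\mu(s),\mu(t))$ is exactly $J=R(J)$, and both variables are forced into $\dom\mu$. For $e=e_1\cup e_2$ take $P:=P_{e_1}\UNION P_{e_2}$ with both children using the same distinguished pair $s,t$; the semantics is the union of the two mapping sets, so the projected pairs form $e_1(J)\cup e_2(J)$, and each mapping still binds $s,t$. For $e=e_1\circ e_2$ introduce a fresh middle variable $m$, build $P_{e_1}$ on the pair $(s,m)$ and $P_{e_2}$ on the pair $(m,t)$ with otherwise disjoint auxiliary variables, and set $P:=P_{e_1}\AND P_{e_2}$. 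Because the only variable shared by the two children is $m$, two mappings from the children are compatible exactly when they agree on $m$; the join therefore glues a pair $(\mu(s),y)\in e_1(J)$ to a pair $(y,\mu(t))\in e_2(J)$ through a common middle value $y=\mu(m)$, which is precisely relational composition. Each of these steps preserves Property~1 and the predicate-$r$ invariant.

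The difference case $e=e_1-e_2$ is the heart of the argument and the step I expect to be the main obstacle, since $\fragnegbound$ offers only the negated bound constraint to work with. Writing $\Omega_1=\semm{P_1}{G_J}$ and $\Omega_2=\semm{P_2}{G_J}$ for the two children, I would realise set difference through the identity $\Omega_1\smallsetminus\Omega_2=\semm{(P_1\OPT P_2)\FILTER \neg\bound(?w)}{G_J}$, valid whenever $?w$ is a variable bound in every mapping of $\Omega_2$ but occurring nowhere in $P_1$: the $\OPT$ produces $(\Omega_1\Join\Omega_2)\cup(\Omega_1\smallsetminus\Omega_2)$, every mapping of the join binds $?w$ and is deleted by the filter, while every mapping of $\Omega_1\smallsetminus\Omega_2$ leaves $?w$ unbound and survives. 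To make the surviving set project onto $e_1(J)-e_2(J)$ I would build $P_{e_1}$ and $P_{e_2}$ on the \emph{same} distinguished pair $(s,t)$ but with disjoint auxiliary variables, so that their mappings are compatible exactly when they agree on $s$ and $t$; then a mapping of $\Omega_1$ has no compatible partner in $\Omega_2$ precisely when its pair $(\mu(s),\mu(t))$ lies outside $e_2(J)$, which is exactly what set difference requires.

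The one ingredient still missing is the witness $?w$: it must be fresh, hence absent from $P_{e_1}$, yet bound in every answer of the right child. Since the only uniformly bound variables of $P_{e_2}$ are $s$ and $t$, both of which already occur in $P_{e_1}$, I would manufacture such a $?w$ by conjoining a single extra triple, setting $P_2:=P_{e_2}\AND(s,r,?w)$ with $?w$ fresh. The key observation that makes this harmless is that the first coordinate of every pair in $e_2(J)$ is always a first coordinate of some pair of $J$ — an easy induction on $e_2$ — so $\mu(s)$ always has an $r$-successor in $G_J$; the extra conjunct therefore never discards a mapping, it only binds $?w$ to some such successor, leaving the projection onto $(s,t)$ equal to $e_2(J)$ while guaranteeing $?w\in\dom\mu$ for every answer of $P_2$. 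With this witness the filter extracts exactly $\Omega_1\smallsetminus\Omega_2$: Property~2 then yields $e_1(J)-e_2(J)$, Property~1 holds because every surviving mapping lies in $\Omega_1$, and the predicate-$r$ invariant (and thus Property~3) is preserved because the added triple again uses $r$. Finally I would instantiate the strengthened claim at $(s,t)=(?x,?y)$ to recover the lemma as stated.
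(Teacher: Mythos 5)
Your proposal is correct and follows the paper's proof in all essentials: the same structural induction, the same patterns for $R$, for union (UNION), and for composition (AND with a shared fresh middle variable, which the paper phrases as renaming $?y$ and $?x$ to a common fresh $?z$), and the same OPT-plus-$\neg\bound$ trick for difference. The one place you deviate is the difference gadget: the paper conjoins a triple $(?u,r,?w)$ with \emph{both} variables fresh and filters on $\neg\bound(?u)$, so that compatibility with the right operand never hinges on the auxiliary triple (any solution of $P_{e_2}$ joins with any solution of $(?u,r,?w)$, and the latter set is nonempty whenever $e_2(J)$ is, since $e_2(J)\neq\emptyset$ forces $J\neq\emptyset$); you instead anchor the triple at the shared subject, $(s,r,?w)$, and filter on $\neg\bound(?w)$, which obliges you to prove the extra invariant that the first coordinate of every pair in $e_2(J)$ is the first coordinate of some pair of $J$, so that the added conjunct never prunes a solution of $P_{e_2}$. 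That invariant is true (and your induction for it goes through), so both gadgets are sound; the paper's fully-fresh version simply buys a shorter verification, while yours costs one auxiliary lemma. Your explicit treatment of Property~3 via the ``all predicates are $r$'' invariant, and of variable renaming via the strengthened induction hypothesis over an arbitrary distinguished pair $(s,t)$, are details the paper leaves implicit, and they are handled correctly.
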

\begin{proof}
By induction on the structure of $e$.  If $e$ is $R$ then $P_e$
is the triple pattern $(?x,r,?y)$.

If $e$ is of the form $e_1 \cup e_2$, then $P_e$ is $P_{e_1}
\UNION P_{e_2}$.

If $e$ is of the form $e_1 \circ e_2$, then $P_e$ is $P'_{e_1}
\AND P'_{e_2}$, where $P'_{e_1}$ and $P'_{e_2}$ are obtained as
follows.  First, by renaming variables, we may assume without
loss of generality that $P_{e_1}$ and $P_{e_2}$ have no variables
in common other than $?x$ and $?y$.  Let $?z$ be a fresh
variable.  Now in $P_{e_1}$, rename $?y$ to $?z$, yielding
$P'_{e_1}$, and in $P_{e_2}$, rename $?x$ to $?z$, yielding
$P'_{e_2}$.

Finally, if $e$ is of the form $e_1 - e_2$, then we use a known
idea \cite{chile_sparql_pods}.  As before we may assume
without loss of generality that $P_{e_1}$ and $P_{e_2}$ have no
variables in common other than $?x$ and $?y$.  Let $?u$ and $?w$
be two fresh variables.  Then $P_e$ is equal to $$ \bigl ( P_{e_1}
\OPT (P_{e_2} \AND (?u,r,?w)) \bigr ) \FILTER {\neg \bound(?u)}.
$$
\end{proof}

The above lemma provides us with a reduction from
satisfiability for DA to satisfiability for $\fragnegbound$, thus
showing undecidability of the latter problem.
Indeed, if $e$ is satisfiable, then clearly $P_e$ is satisfiable
as well, by property~2 of the lemma.  Conversely, if $P_e$ is
satisfiable by some RDF graph $G$, then, by
property~3 of the lemma, $\semm {P_e}{G^r}$ is nonempty.  Now
define the binary relation $J = \{(c,d) \mid (c,r,d) \in G\}$.
Then $G_J = G^r$, so by property~2 of the lemma we obtain the
nonemptiness of $e(J)$ as desired.

\subsection{$\frageqneq$} \label{seceqneq}

We now consider a minor variant of satisfiability for DA-expressions
where we restrict attention to binary relations over at least two
elements.  Formally, 
the \emph{active domain} of a binary relation $J$ is the set of all
entries in pairs belonging to $J$, so $\adom J := \{x \mid
\exists y : (x,y) \in J$ or $(y,x) \in J\}$.
Then a DA-expression $e$ is called \emph{two-satisfiable} if 
$e(J)$ is nonempty for some $J$ such that $\adom J$ has at least
two distinct elements.

Clearly, two-satisfiability is undecidable as well, for if it
were decidable, then satisfiability would be decidable too.
Indeed, $e$ is satisfiable if and only if it is two-satisfiable,
or satisfiable by a binary relation $J$
over a single element.  Up to isomorphism there is only one such
$J$ (the singleton $\{(x,x)\}$), and DA-expressions commute with
isomorphisms.

Lemma~\ref{reductionlemma} can now be adapted as follows.
Property~2 of the lemma is only claimed for every
binary relations $J$ over at least two distinct
elements.  In the proof for the case where $e$ is $e_1 - e_2$, we
use six fresh variables $?u$, $?u'$, $?v$, $?v'$, $?w$, and $?w'$.  We use the
abbreviation $\vadom{?u}$ for $(?u,r,?w) \UNION (?v,r,?u)$ and
similarly for $\vadom{?u'}$.  We 
now use the following pattern for $P_e$:
\begin{multline*}
\Bigl ( \bigl ( P_{e_1} \OPT
(
(P_{e_2} \AND \vadom{?u} \AND \vadom{?u'}) \FILTER {?u \neq {?u'}}
) \bigr) \\
{} \AND \vadom{?u} \AND \vadom{?u'} \Bigr ) \FILTER {?u={?u'}}. 
\end{multline*}

Let us verify that $P_e$ satisfies the three properties of
Lemma~\ref{reductionlemma}.
\begin{proof}
\begin{enumerate}
\item
By induction, $P_{e_1}$ has the property that
every returned solution mapping
has $?x$ and $?y$ in its domain.  Since $P_e$ is of the form
$$(P_{e_1} \OPT \ldots) \FILTER \ldots$$ the same property holds for $P_e$.

\item
Let $J$ be a binary relation on at least two distinct elements.
To prove the equality $$ e(J) = \{(\mu(?x),\mu(?y)) \mid \mu \in \semm
{P_e}{G_J}\} $$ we are going to consider both inclusions.  For easy reference we
name some subpatterns of $P_e$ as follows.
\begin{itemize}
\item
$P_2$ denotes
$(P_{e_2} \AND \vadom{?u} \AND \vadom{?u'}) \FILTER {?u \neq {?u'}}$;
\item
$P_3$ denotes $P_{e_1} \OPT P_2$.
\item
Thus, $P$ is
$(P_3 \AND \vadom{?u} \AND \vadom{?u'}) \FILTER {?u = {?u'}}$.
\end{itemize}

To prove the inclusion from right to left, let $\mu \in \semm
{P_e}{G_J}$.  Then $\mu = \mu_3 \cup \varepsilon$, where $\mu_3 \in
\semm {P_3}{G_J}$ and $\varepsilon$ is a mapping defined on $?u,$ and $?u'$
such that $\varepsilon(?u) = \varepsilon(?u')$.  In particular, $\mu_3 \sim
\varepsilon$.  Since $P_3 = P_{e_1} \OPT
P_2$, there are two possibilities for $\mu_3$:
\begin{itemize}
\item
$\mu_3 \in \semm {P_{e_1}}{G_J}$ and there is no $\mu_2 \in \semm
{P_2}{G_J}$ such that $\mu_3 \sim \mu_2$.  By induction, both
$?x$ and $?y$ belong to $\dom {\mu_3}$, so $(\mu(?x),\mu(?y))$
equals $(\mu_3(?x),\mu_3(?y))$, which belongs to $e_1(J)$ again by
induction.  So it remains to show that $(\mu(?x),\mu(?y)) \notin
e_2(J)$.  Assume the contrary.
Then there exists $\mu'_2 \in \semm {P_{e_2}}{G_J}$ 
such that $(\mu_3(?x),\mu_3(?y)) = (\mu'_2(?x),\mu'_2(?y))$.  
Since $\adom J$ has at least two distinct elements,
$\mu_2'$ can be extended to a mapping $\mu_2 \in \semm {P_2}{G_J}$.
Since $?x$ and $?y$ are the only
variables common to $\var {P_{e_1}}$ and $\var {P_2}$, we conclude
$\mu_3 \sim \mu_2$ which is a contradiction.
\item
$\mu_3 = \mu_1 \cup \mu_2$ with $\mu_1 \in \semm {P_{e_1}}{G_J}$
and $\mu_2 \in \semm {P_2}{G_J}$.  In particular, $\mu_3$ is
defined on $?u$ and $?u'$ and $\mu_3(?u) \neq \mu_3(?u')$.
On the other hand, since $\mu_3 \sim \varepsilon$, and
$\varepsilon(?u) =
\varepsilon(?u')$, also $\mu_3(?u) = \mu_3(?u')$.  This is a
contradiction, so the possibility under consideration cannot
happen.
\end{itemize}

To prove the inclusion from left to right, let $(c,d) \in e(J)$.
Since $(c,d) \in e_1(J)$, there exists $\mu_1 \in \semm
{P_{e_1}}{G_J}$ such that $(c,d) = (\mu_1(?x),\mu_1(?y))$.  
Assume, for the sake of argument, that there \emph{would} exist $\mu_2
\in \semm {P_2}{G_J}$ such that $\mu_1 \sim \mu_2$.  Mapping
$\mu_2$ contains a mapping $\mu_2' \in \semm {P_{e_2}}{G_J}$, by 
definition of $P_2$.  Since $(\mu_2'(?x),\mu_2'(?y)) \in e_2(J)$
and $\mu_1 \sim \mu_2$, it follows that $(c,d) \in e_2(J)$ which
is a contradiction.

So, we now know that there does \emph{not} exist
$\mu_2 \in \semm {P_2}{G_J}$ such that $\mu_1 \sim \mu_2$.
Hence, $\mu_1 \in \semm {P_3}{G_J}$.  Note that the
six variables $?u$, $?u'$, $?v$, $?v'$, $?w$, and $?w'$ 
do not belong to $\dom {\mu_1}$.  Since $J$ is nonempty,
$\mu_1$ can thus be extended to a mapping $\mu \in
\semm P{G_J}$.  We conclude $(c,d) = (\mu_1(?x),\mu_1(?y)) =
(\mu(?x),\mu(?y))$ as desired.

\item
The third property of
Lemma~\ref{reductionlemma} holds because
$\semm {\vadom{?u}}{G} = \semm {\vadom{?u}}{G^r}$ (and similarly
for $\vadom{?u'}$).
\end{enumerate}
\end{proof}

Using the adapted lemma, we can now reduce two-satisfiability for
DA to satisfiability for $\frageqneq$.  Indeed, a DA-expression
$e$ is two-satisfiable if and only if the pattern $$ P_e \AND
((\vadom{?u} \AND \vadom{?u'}) \FILTER {?u \neq {?u'}}) $$ is
satisfiable, where all variables used in $\vadom{?u}$ and
$\vadom{?u'}$ are distinct and disjoint from those used in $P_e$.

\subsection{$\frageqc$} \label{secfrageqc}

We consider a further variant of two-satisfiability, called
\emph{$ab$-satisfiability}, for two arbitrary fixed constants
$a,b \in I$ that are distinct from the constant $r$ already used
for Lemma~\ref{reductionlemma}.  A DA-expression is called
$ab$-satisfiable if $e(J)$ is nonempty for some binary relation
$J$ where $a,b \in \adom J$.

Since DA-expressions do not distinguish between isomorphic binary
relations, $ab$-satisfiability is equivalent to
two-satisfiability, and thus still undecidable.

We now again adapt Lemma~\ref{reductionlemma}, as follows.  
Property~2 is only claimed for every
binary relation $J$ such that $a,b \in
\adom J$.  In the proof for the case $e=e_1- e_2$, we now use the
following pattern for $P_e$:
\[
\Bigl ( \bigl ( P_{e_1} \OPT
((P_{e_2} \AND \vadom{?u}) \FILTER {?u = a})\bigr )
\AND \vadom{?u} \Bigr ) \FILTER {?u=b}. 
\]

The proof correctness of this construction is analogous to the
proof given in the previous Section~\ref{seceqneq}; instead of
exploiting the inconsistency between $?u \neq ?u'$ and $?u = ?u'$
as done in that proof, we now exploit the inconsistency between
$?u=a$ and $?u=b$.

We then obtain that $e$ is $ab$-satisfiable if and only if $$ P_e
\AND (\vadom{?u} \AND \vadom{?u'}) \FILTER {?u=a} \FILTER {?u'=b} $$
is satisfiable, establishing a reduction from $ab$-satisfiability
for DA to satisfiability for $\frageqc$.

\begin{remark}
Recall that literals cannot appear in first or second position in
an RDF triple.  Patterns using constant-equality predicates can
be unsatisfiable because of that reason.  For example, using the
literal 42, the pattern $(?x,?y,?z) \FILTER {?y=42}$ is
unsatisfiable.  However, we have seen here that
the use of constant-equality predicates leads to undecidability
of satisfiability for a much more fundamental reason, that has
nothing to do with literals, namely, the
ability to emulate set difference.
\end{remark}

\section{Satisfiability of well-designed patterns}
\label{secwell}

The \emph{well-designed} patterns \cite{perez_sparql_tods} have
been identified as a well-behaved class of SPARQL patterns, with
properties similar to the conjunctive queries for relational
databases \cite{ahv_book}.  Standard conjunctive queries are
always satisfiable, and conjunctive queries extended with
equality and nonequality constraints, possibly involving
constants, can only be unsatisfiable if the constraints are
inconsistent.  An analogous behavior is present in what we call
\emph{AF-patterns}: patterns that only use the AND and FILTER
operators. We will formalize this in Proposition~\ref{propaf}.
We will then show in Theorem~\ref{theorwell} that a well-designed
pattern is satisfiable if and only if its reduction to an
AF-pattern is satisfiable.  In other words, as far as
satisfiability is concerned, well-designed patterns can be
treated like AF-patterns.

\subsection{Satisfiability of AF-patterns}

In Section~\ref{seccheck} we have associated a set of schemes
$\Gamma(P)$ to every pattern $P$.  When $\Gamma(P)$ is empty, $P$
is unsatisfiable (Lemma~\ref{lemoif}).

Now when $P$ is an AF-pattern and $\Gamma(P)$ is nonempty,
the satisfiability of $P$ will turn out to depend solely on the
equalities, nonequalities, constant-equalities, and
constant-nonequalities occurring as filter conditions in $P$.  
We will denote the set of these constraints by $C(P)$.

Any set $\Sigma$ of constraints is called
\emph{consistent} if there exists a mapping
that satisfies every constraint in $\Sigma$.

We establish:

\begin{proposition} \label{propaf}
An AF-pattern $P$ is satisfiable if and only if\/ $\Gamma(P)$ is
non-empty and $C(P)$ is consistent.
\end{proposition}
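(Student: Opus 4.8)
The plan is to prove both directions of the biconditional, handling an AF-pattern $P$ by structural induction on its syntax tree, whose internal nodes are only $\AND$ and $\FILTER$ operators.

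For the forward (only-if) direction, suppose $P$ is satisfiable, so some $\mu \in \sem{P}$ for a graph $G$. By Lemma~\ref{lemoif} there is an $S \in \Gamma(P)$ with $\dom\mu = S$, so $\Gamma(P)$ is nonempty. For consistency of $C(P)$, I would argue that the witnessing $\mu$ itself satisfies every constraint in $C(P)$: since filters only remove mappings that violate their condition, any $\mu$ surviving to the top of the tree must satisfy each equality, nonequality, constant-equality, and constant-nonequality it passed through. The mild subtlety is that $C(P)$ collects \emph{all} such constraints appearing anywhere in $P$, not just along one branch; but in an AF-pattern every $\AND$ node merges (compatible) submappings whose domains union together, and a single top-level $\mu$ restricts on each subpattern to a surviving submapping, so $\mu$ (possibly after noting it is defined on all relevant variables) witnesses satisfaction of every constraint in $C(P)$ simultaneously. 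Thus $C(P)$ is consistent.

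For the harder (if) direction, assume $\Gamma(P)$ is nonempty and $C(P)$ is consistent, witnessed by some mapping $\nu$ satisfying all of $C(P)$. The plan is to build a single graph $G$ and a mapping in $\sem{P}$ by mimicking the constructions of Lemmas~\ref{lemifeq} and~\ref{lemifneq}, but now respecting the equality classes forced by $\nu$. Concretely, I would define a mapping $\mu$ on $\var P$ that assigns the \emph{same} constant to two variables exactly when $C(P)$ forces them equal (via the equalities in $C(P)$ and the constant-equalities), assigns the prescribed constant to any variable fixed by a constant-equality, and otherwise assigns fresh distinct constants — choosing all constants to avoid the finitely many constants appearing in constant-nonequalities unless a constant-equality demands otherwise. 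Consistency of $C(P)$ guarantees this assignment is well-defined and satisfies every constraint in $C(P)$. Then I take $G$ to consist of all triples $(\mu(u),\mu(v),\mu(w))$ for triple patterns $(u,v,w)$ in $P$, and pick some $S \in \Gamma(P)$.

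The main obstacle, which I would isolate as the core of the induction, is showing that $\mu|_{S'} \in \sem{P}$ for a suitable $S' \supseteq S$, exactly paralleling the inductive statements of Lemmas~\ref{lemifeq} and~\ref{lemifneq}. The triple-pattern and $\AND$ cases go through verbatim as there (restrictions of one global $\mu$ are automatically compatible, and $\AND$ unions schemes). The $\FILTER$ case is where the consistency hypothesis does the work: for a constraint $C$ in $C(P)$ with $S \vdash C$, I must check $\mu|_{S'} \models C$, and this holds precisely because $\mu$ was built to satisfy all of $C(P)$ — equalities hold because equal-forced variables share a value, nonequalities hold because non-forced variables get distinct values, and the constant cases hold by the choice of constants. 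The one point requiring care is that $S \vdash C$ only guarantees the relevant variables lie in $S$ (hence in $S'$), so the satisfaction $\mu|_{S'} \models C$ is meaningful; combined with $\mu \models C$ globally, the restriction still satisfies $C$. This completes the induction and yields $\mu|_{S'} \in \sem{P}$, proving $P$ satisfiable.
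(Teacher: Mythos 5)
Your proof follows the paper's route almost exactly. The only-if direction is the same: Lemma~\ref{lemoif} gives nonemptiness of $\Gamma(P)$, and the observation that the four constraint types collected in $C(P)$ are preserved when a mapping is extended (so a top-level $\mu \in \sem P$ satisfies even the constraints filtered deep inside the AND-tree) gives consistency. Your if-direction also has the paper's shape: build $G$ from the triple patterns under a mapping satisfying $C(P)$, then induct on subpatterns. One remark: your explicit equivalence-class construction of $\mu$ is redundant --- consistency of $C(P)$ by definition already hands you a witness, which you can simply extend to the remaining variables of $\var P$ by arbitrary fresh values, since those variables occur in no constraint; this is all the paper does.

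There is, however, a genuine gap in your FILTER case: you only verify the constraints that belong to $C(P)$, but an AF-pattern may also contain $\bound(?x)$ and, crucially, $\neg\bound(?x)$ as filter conditions --- nothing in the definition of AF-patterns excludes them, and they are deliberately \emph{not} collected in $C(P)$. For $\bound(?x)$ your argument survives, since $S \vdash \bound(?x)$ gives $?x \in S \subseteq S'$. For $\neg\bound(?x)$ it does not: your inductive statement ``$\mu|_{S'} \in \sem{Q}$ for some $S' \supseteq S$'', copied from Lemmas~\ref{lemifeq} and~\ref{lemifneq}, is too weak, because $S \vdash \neg\bound(?x)$ only tells you $?x \notin S$, and nothing prevents the slack set $S'$ from containing $?x$, in which case $\mu|_{S'} \not\models \neg\bound(?x)$ and the induction breaks. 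The paper instead exploits what is special about AF-patterns: with no UNION and no OPT, $\Gamma(Q)$ is a nonempty \emph{singleton} $\{S(Q)\}$ for every subpattern $Q$ (nonempty because $\Gamma(P)$ is), and the induction maintains that the constructed mapping has domain \emph{exactly} $S(Q)$, not merely a superset. With exact domains, a $\neg\bound(?x)$ filter is satisfied precisely because $?x \notin S(Q_1) = \dom{\mu|_{S(Q_1)}}$. Your proof is repaired by replacing ``$S' \supseteq S$'' with this exact-scheme invariant --- in an AF-pattern the slack never actually arises, since triple patterns, AND, and FILTER all produce the scheme on the nose --- but as written the negated-bound case is unaddressed and your stated invariant cannot handle it.
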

\begin{proof}
The only-if direction of this proposition is given by
Lemma~\ref{lemoif} together with the
observation that if $\mu \in \sem P$, then $\mu$ satisfies
every constraint in $C(P)$.  Since $P$ is satisfiable, such $G$
and $\mu$ exist, so $C(P)$ is consistent.

For the if direction, since $P$ does not have the UNION and OPT
operators, 
$\Gamma(P)$ is a singleton $\{S\}$.  Since $C(P)$ is consistent,
there exists a mapping
$\mu : S \to U$ satisfying every constraint in $C(P)$.
Let $G$ be the graph consisting of all triples
$(\mu(u),\mu(v),\mu(w))$ where $(u,v,w)$ is a triple pattern in
$P$.  It is straightforward to show by induction on the height of $Q$ that for
every subpattern $Q$ of $P$, we have $\mu|_{S'} \in \sem Q$,
where $\Gamma(Q)=\{S'\}$.  Hence $\mu \in \sem P$ and $P$ is
satisfiable.
\end{proof}

Note that $\Gamma(P)$ can ``blow up'' only because of possible
UNION and OPT operators, which are missing in an AF-pattern.
Hence, for an AF-pattern $P$, we can efficiently compute
$\Gamma(P)$ by a single bottom-up pass over $P$.  Morever, $C(P)$
is a conjunction of possibly negated equalities and constant
equalities.  It is well known that consistency of such
conjunctions can be decided in polynomial time
\cite{decisionprocedures}.  Hence, we conclude:

\begin{corollary} \label{afptime}
Satisfiability for AF-patterns can be checked in polynomial time.
\end{corollary}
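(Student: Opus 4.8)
The plan is to invoke Proposition~\ref{propaf}, which reduces satisfiability of an AF-pattern $P$ to two independent conditions: non-emptiness of $\Gamma(P)$ and consistency of $C(P)$. It therefore suffices to show that each of these two checks, together with the extraction of $C(P)$ from $P$, can be carried out in polynomial time.

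First I would argue that $\Gamma(P)$ is computable in polynomial time. The only clauses in the inductive definition of $\Gamma$ that can produce more than one scheme are those for $\UNION$ and $\OPT$, and these operators are by definition absent from an AF-pattern. For a triple pattern $\Gamma$ returns a singleton; the $\AND$ clause combines two singletons into the single scheme given by their union; and the $\FILTER$ clause either keeps the one scheme it receives or discards it. Hence, by a straightforward induction on the structure of $P$, $\Gamma(P)$ is always either empty or a singleton $\{S\}$, and it can be computed by a single bottom-up pass over the syntax tree of $P$ that maintains at most one scheme at each node. Non-emptiness of $\Gamma(P)$ is then read off directly.

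Next I would treat the consistency check for $C(P)$. The set $C(P)$ consists of the equality, nonequality, constant-equality and constant-nonequality constraints occurring as filter conditions in $P$, and collecting them requires only a single traversal of $P$. A conjunction of such (possibly negated) equalities between variables and constants is an instance of the decision problem for the quantifier-free theory of equality with uninterpreted constants, whose consistency is decidable in polynomial time by a congruence-closure / union-find procedure~\cite{decisionprocedures}. Running this procedure on $C(P)$ decides its consistency in polynomial time.

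Combining the two polynomial-time checks via Proposition~\ref{propaf} yields the claim. I do not anticipate a genuine obstacle: the one point that needs care is the first step, namely verifying that the recursion for $\Gamma$ never branches on AF-patterns, so that the entire computation stays polynomial rather than blowing up as it can for general patterns; once this is observed, the consistency test is simply a black-box appeal to the cited equality-logic decision procedure.
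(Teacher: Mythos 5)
Your proposal is correct and takes essentially the same route as the paper: both reduce via Proposition~\ref{propaf} to checking nonemptiness of $\Gamma(P)$, which stays a singleton (or empty) on AF-patterns since the branching $\UNION$/$\OPT$ clauses are absent, and to checking consistency of $C(P)$ by the polynomial-time decision procedure for conjunctions of (negated) equalities and constant-equalities cited in the paper. Your extra remark about congruence closure/union-find just makes explicit the black-box the paper invokes.
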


\subsection{AF-reduction of well-designed patterns}

A well-designed pattern is defined as a union of union-free
well-designed patterns.  Since a union is satisfiable if and only
if one of its terms is, we will focus on union-free patterns in
what follows.  Formally, a union-free pattern $P$ is called
\emph{well-designed} \cite{perez_sparql_tods} if
\begin{enumerate}
\item
for every subpattern of $P$ of the form
$Q \, \FILTER \, C$, all variables mentioned in $C$ also occur in $Q$;
and
\item
for every subpattern $Q$ of $P$ of the form $Q_1 \OPT Q_2$, and every
$?x \in \var{Q_2}$, if $?x$ also occurs in $P$ outside of $Q$,
then $?x \in \var{Q_1}$.
\end{enumerate}

We associate to every union-free pattern $P$ an AF-pattern
$\rho(P)$ obtained by removing all applications of OPT and their
right operands; the left operand remains in place.  Formally, we
define the following:
\begin{itemize}
\item
If $P$ is a triple pattern, then $\rho(P)$ equals $P$.
\item
If $P$ is of the form $P_1 \AND P_2$, then $\rho(P) = \rho(P_1)
\AND \rho(P_2)$.
\item
If $P$ is of the form $P_1 \FILTER C$, then $\rho(P) = \rho(P_1)
\FILTER C$.
\item
If $P$ is of the form $P_1 \OPT P_2$, then $\rho(P) =
\rho(P_1)$.
\end{itemize}

For further use we note that $\Gamma(P)$ and
$\Gamma(\rho(P))$ are related in the following way.  The proof by
induction is straightforward.

\begin{lemma} \label{lemstuck}
Let $S \in \Gamma(P)$ and let $S' \in \Gamma(\rho(P))$.  Then $S'
\subseteq S$.
\end{lemma}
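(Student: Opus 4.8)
The plan is to prove the statement by induction on the structure of the (union-free) pattern $P$, comparing an arbitrary $S \in \Gamma(P)$ with an arbitrary $S' \in \Gamma(\rho(P))$ in each case. The guiding intuition is that $\rho$ only deletes the right operands of $\OPT$ subpatterns, and since keeping the right operand of an $\OPT$ can only \emph{enlarge} a scheme (through the $\Gamma(P_1 \AND P_2)$ branch of the $\OPT$ rule), deleting it can only make schemes smaller; hence $S' \subseteq S$.

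For the base case, where $P$ is a triple pattern, we have $\rho(P) = P$, so $\Gamma(\rho(P)) = \Gamma(P)$ and $S' = S$, giving $S' \subseteq S$ trivially. The $\FILTER$ and $\AND$ cases are equally routine. If $P = P_1 \FILTER C$, then $\Gamma(P) \subseteq \Gamma(P_1)$ and $\Gamma(\rho(P)) \subseteq \Gamma(\rho(P_1))$, since filtering only deletes schemes; hence $S \in \Gamma(P_1)$ and $S' \in \Gamma(\rho(P_1))$, and the induction hypothesis for $P_1$ yields $S' \subseteq S$. If $P = P_1 \AND P_2$, I would write $S = S_1 \cup S_2$ with $S_i \in \Gamma(P_i)$ and $S' = S_1' \cup S_2'$ with $S_i' \in \Gamma(\rho(P_i))$; applying the induction hypothesis to each factor gives $S_i' \subseteq S_i$, and taking unions preserves the inclusion.

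The only case deserving attention is $P = P_1 \OPT P_2$, where $\rho(P) = \rho(P_1)$ and hence $S' \in \Gamma(\rho(P_1))$. Here $\Gamma(P) = \Gamma(P_1 \AND P_2) \cup \Gamma(P_1)$, so there are two subcases for $S$. If $S \in \Gamma(P_1)$, then the induction hypothesis for $P_1$ applies directly to $S$ and $S'$. If instead $S \in \Gamma(P_1 \AND P_2)$, then $S = S_1 \cup S_2$ with $S_1 \in \Gamma(P_1)$ and $S_2 \in \Gamma(P_2)$, so in particular $S_1 \subseteq S$; the induction hypothesis for $P_1$ gives $S' \subseteq S_1$, and chaining the inclusions yields $S' \subseteq S_1 \subseteq S$.

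I do not anticipate a genuine obstacle: the argument is a direct structural induction, and the statement is purely set-theoretic. The one point to handle carefully is this last subcase, where the induction hypothesis relates $S'$ to the schemes of $P_1$ rather than to those of $P$ itself; the step that makes it go through is the observation that every element of $\Gamma(P_1 \AND P_2)$ contains an element of $\Gamma(P_1)$ as a subset, so the inclusion can be transported across $S_1 \subseteq S$.
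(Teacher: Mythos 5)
Your proof is correct and is exactly the ``straightforward induction'' the paper alludes to without spelling out: structural induction over the union-free pattern, with the only nontrivial step being the $\OPT$ case, which you handle properly by noting that every scheme in $\Gamma(P_1 \AND P_2)$ contains a scheme of $\Gamma(P_1)$, so the inclusion transports through $S_1 \subseteq S$. No gaps; the cases for triple patterns, $\FILTER$, and $\AND$ are handled as the definitions of $\Gamma$ and $\rho$ dictate.
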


The announced result is now given by the following theorem.
The if direction of this theorem is already known from a result by
P\'erez et al.~\cite[Lemma~4.3]{perez_sparql_tods}.

\begin{theorem} \label{theorwell}
Let $P$ be a union-free well-designed pattern.  Then $P$ is
satisfiable if and only if $\rho(P)$ is.
\end{theorem}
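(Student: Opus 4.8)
The plan is to prove the two directions separately. The \emph{if} direction (satisfiability of $\rho(P)$ implies that of $P$) I would obtain directly from the cited Lemma~4.3 of P\'erez et al., which guarantees that for a union-free well-designed pattern, on \emph{any} graph $G$, every solution of the and-part $\rho(P)$ extends to a solution of $P$ on the same $G$. Consequently any graph witnessing satisfiability of $\rho(P)$ also witnesses satisfiability of $P$, and there is nothing more to do for this direction.

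For the \emph{only-if} direction I would fix a graph $G$ and a mapping $\mu \in \sem P$ and manufacture a solution of $\rho(P)$ on the \emph{same} graph $G$. Concretely, I would prove by structural induction the following statement about every subpattern $Q$ of $P$: for every graph $G$ and every $\nu \in \semm Q G$, the set $\Gamma(\rho(Q))$ is nonempty, hence (as $\rho(Q)$ is an AF-pattern) a singleton $\{S'\}$, and moreover $S' \subseteq \dom\nu$ (which also follows from Lemmas~\ref{lemoif} and~\ref{lemstuck}) and $\nu|_{S'} \in \semm{\rho(Q)}{G}$. Applying this to $Q=P$, $\nu=\mu$ gives $\mu|_{S'} \in \semm{\rho(P)}{G}$, so $\rho(P)$ is satisfiable. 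The triple-pattern and AND cases are routine; for AND one takes the union of the two restricted sub-mappings, which are compatible as restrictions of compatible mappings. For $Q=Q_1 \OPT Q_2$ one observes that $\nu$ always contains a mapping $\nu_1 \in \semm{Q_1}{G}$ (whether $\nu$ arises from the join or from the difference in the semantics of OPT), and since $\rho(Q)=\rho(Q_1)$ the induction hypothesis applied to $\nu_1$ suffices, using $\nu|_{S'}=\nu_1|_{S'}$.

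The crux, and where I expect the main obstacle to lie, is the FILTER case $Q=Q_1 \FILTER C$. Given $\nu \in \semm Q G$ we have $\nu \in \semm{Q_1}{G}$ and $\nu \models C$; by induction $\Gamma(\rho(Q_1))=\{S'\}$ with $S' \subseteq \dom\nu$ and $\nu|_{S'} \in \semm{\rho(Q_1)}{G}$. I must show the check $S' \vdash C$ succeeds and that $\nu|_{S'} \models C$. For a negated bound constraint $C=\neg\bound(?x)$ this is easy and needs no well-designedness: from $\nu \models C$ we get $?x \notin \dom\nu$, and since $S' \subseteq \dom\nu$ also $?x \notin S'$, giving both $S' \vdash C$ and $\nu|_{S'} \models C$. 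For the positive constraints (a bound constraint, an (in)equality of two variables, or a constant-(non)equality) I must instead guarantee that every variable occurring in $C$ already lies in $S'$; once that is known, $S' \vdash C$ holds and, since $\nu|_{S'}$ agrees with $\nu$ on those variables, $\nu|_{S'} \models C$ follows from $\nu \models C$.

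This reduces everything to a purely syntactic Key Lemma, whose proof is exactly where both well-designedness conditions are consumed: for every filter subpattern $Q_1 \FILTER C$ of $P$ and every positive $C$, each variable of $C$ belongs to the scheme of $\rho(Q_1)$, i.e.\ occurs in some triple pattern of $\rho(Q_1)$. I would derive it from a helper proved by structural induction on $Q$: if a variable $?v$ occurs in $Q$ and also occurs in $P$ somewhere outside $Q$, then $?v$ occurs in a triple pattern of $\rho(Q)$. Well-designedness condition~1 gives $\var C \subseteq \var{Q_1}$, while $C$ itself supplies an occurrence of each of its variables outside $Q_1$, so the helper applies with $Q=Q_1$. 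In the helper the triple-pattern, AND and FILTER cases are immediate; the only interesting case is $Q=Q_1 \OPT Q_2$, where $\rho(Q)=\rho(Q_1)$: if $?v$ occurs only in $Q_2$, then because $?v$ also occurs outside $Q$, well-designedness condition~2 forces $?v \in \var{Q_1}$, and the induction hypothesis on $Q_1$ finishes the argument. Thus condition~1 feeds the hypothesis of the helper and condition~2 is precisely what pushes optional variables back into the mandatory part, which is the heart of the proof.
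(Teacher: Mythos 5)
Your proposal is correct and takes essentially the same route as the paper: the if direction is delegated to P\'erez et al.'s Lemma~4.3, and the only-if direction restricts a solution of $P$ to the scheme of $\rho(P)$, with your syntactic helper (a variable occurring both in $Q$ and in $P$ outside $Q$ must occur in a triple pattern of $\rho(Q)$) playing exactly the role of the second conclusion of the paper's Lemma~\ref{lemrho}, namely $\varp PQ \subseteq S(\rho(Q))$, and consuming the two well-designedness conditions in the same places (condition~1 in the FILTER case, condition~2 in the OPT case). The only difference is bookkeeping: you fold the nonemptiness of $\Gamma(\rho(Q))$ into a single semantic induction, whereas the paper establishes it separately in the syntactic Lemma~\ref{lemrho} and then runs the semantic induction as Lemma~\ref{lemlast}.
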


Since $\rho(P)$ can be efficiently computed from $P$,
the above Theorem and Corollary~\ref{afptime} imply:

\begin{corollary} \label{ptimecor}
Satisfiability of union-free well-designed patterns can be tested
in polynomial time.
\end{corollary}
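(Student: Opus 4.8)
The plan is to obtain the corollary as a direct consequence of Theorem~\ref{theorwell} together with Corollary~\ref{afptime}, so the only thing that remains to be justified is that the map $P \mapsto \rho(P)$ can itself be carried out within a polynomial time bound and delivers a pattern to which Corollary~\ref{afptime} applies. The decision procedure I have in mind is therefore a two-stage algorithm: on input a union-free well-designed pattern $P$, first compute $\rho(P)$, and then run the polynomial-time AF-satisfiability test supplied by Corollary~\ref{afptime} on $\rho(P)$; the answer returned for $\rho(P)$ is reported as the answer for $P$.

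First I would verify that $\rho(P)$ is computable in polynomial (indeed linear) time and has size bounded by that of $P$. This is immediate from the recursive definition of $\rho$: a single bottom-up traversal of the syntax tree of $P$ suffices, copying triple patterns and $\AND$/$\FILTER$ nodes verbatim while, at each $\OPT$ node, discarding the right operand and promoting the already-processed left operand. Each node of $P$ is visited once and contributes at most itself to the output, so $\rho(P)$ is a structural part of $P$ and in particular $|\rho(P)| \le |P|$.

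Next I would confirm that $\rho(P)$ is an AF-pattern, which is what licenses the appeal to Corollary~\ref{afptime}. Since $P$ is union-free it contains no $\UNION$ operator, and the definition of $\rho$ introduces none; moreover every $\OPT$ of $P$ has been removed by $\rho$ and no new $\OPT$ is created. A trivial induction on the structure of $P$ shows that $\rho(P)$ is built solely from triple patterns using $\AND$ and $\FILTER$, i.e.\ is an AF-pattern. Correctness of the overall algorithm then follows by chaining the two cited results: by Theorem~\ref{theorwell}, $P$ is satisfiable if and only if $\rho(P)$ is, and by Corollary~\ref{afptime} the satisfiability of the AF-pattern $\rho(P)$ is decided correctly by the polynomial-time test. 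Combining a linear-time computation of $\rho(P)$ with a polynomial-time test on an input of size at most $|P|$ yields a polynomial-time procedure overall.

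The honest assessment is that there is no genuine obstacle here; the mathematical content lives entirely in Theorem~\ref{theorwell} and Corollary~\ref{afptime}, and the present statement is a packaging of those two facts together with the elementary observation that $\rho$ is cheap to compute. The only points requiring attention are the two routine checks above---that $\rho$ runs in polynomial time and that its output lands in the AF fragment---so that the hypothesis of Corollary~\ref{afptime} is actually met. (If one wished to extend the statement from union-free to arbitrary well-designed patterns, one would additionally invoke the remark that a union is satisfiable exactly when one of its disjuncts is, testing each union-free component in turn; but as worded the corollary concerns only the union-free case, so this is not needed.)
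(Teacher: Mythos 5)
Your proposal is correct and coincides with the paper's own argument: the paper derives Corollary~\ref{ptimecor} exactly by combining Theorem~\ref{theorwell} with Corollary~\ref{afptime}, noting that $\rho(P)$ is efficiently computable. Your two routine verifications---that $\rho$ is a linear-time bottom-up transformation and that its output on a union-free pattern is an AF-pattern---merely make explicit what the paper leaves implicit, so there is nothing to add.
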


\subsection{Proof}

We prove the only-if direction of Theorem~\ref{theorwell}.
We begin by introducing two auxiliary
notations.
\begin{enumerate}
\item
For any pattern $P$ and subpattern $Q$ of $P$,
we denote by $\varp PQ$ the set
of variables from $\var Q$ that also occur in $P$ outside of $Q$.
\item
When $P$ is an AF-pattern with nonempty
$\Gamma(P)$, it is readily seen that $\Gamma(P)$ in that case
consists of a single scheme.  We denote the unique scheme in
$\Gamma(P)$ by $S(P)$.
\end{enumerate}

The following lemma connects the above two notations:
\begin{lemma} \label{lemrho}
Let $P$ be a union-free well-designed pattern, and let $Q$ be a
subpattern of $P$ such that $\Gamma(Q)$ is nonempty. Then
$\Gamma(\rho(Q))$ is nonempty as well, and
$\varp PQ \subseteq S(\rho(Q))$.
\end{lemma}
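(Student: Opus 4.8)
The plan is to prove the lemma by induction on the structure of $Q$, with the ambient pattern $P$ held fixed; since $P$ is union-free, no subpattern is a union, so only four cases arise. Two simple facts will be used repeatedly. First, if $Q'$ is a subpattern of $Q$, then any variable of $\var{Q'}$ that occurs in $P$ outside of $Q$ also occurs in $P$ outside of $Q'$ (the region outside $Q$ is contained in the region outside $Q'$), and therefore lies in $\varp P{Q'}$; in particular $\varp PQ \subseteq \varp P{Q'}$ whenever $\var Q \subseteq \var{Q'}$. Second, as noted just before the lemma, whenever $\rho(Q)$ is an AF-pattern with $\Gamma(\rho(Q))$ nonempty, this set is a singleton $\{S(\rho(Q))\}$. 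The base case, where $Q$ is a triple pattern, is immediate: $\rho(Q)=Q$, $S(\rho(Q))=\var Q$, and $\varp PQ \subseteq \var Q$ always.

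For $Q = Q_1 \AND Q_2$, nonemptiness of $\Gamma(Q)$ forces both $\Gamma(Q_1)$ and $\Gamma(Q_2)$ to be nonempty, so the induction hypothesis applies to each; then $\Gamma(\rho(Q)) = \{\,S(\rho(Q_1)) \cup S(\rho(Q_2))\,\}$ is nonempty. Any $?x \in \varp PQ$ lies in some $\var{Q_i}$ and occurs in $P$ outside $Q$, hence outside $Q_i$, so $?x \in \varp P{Q_i} \subseteq S(\rho(Q_i)) \subseteq S(\rho(Q))$. The OPT case $Q = Q_1 \OPT Q_2$ is where the second well-designedness condition enters. Nonemptiness of $\Gamma(Q) = \Gamma(Q_1 \AND Q_2) \cup \Gamma(Q_1)$ again yields $\Gamma(Q_1)$ nonempty, and since $\rho(Q) = \rho(Q_1)$ the induction hypothesis gives both the nonemptiness of $\Gamma(\rho(Q))$ and $S(\rho(Q)) = S(\rho(Q_1))$. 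For the containment, take $?x \in \varp PQ$: if $?x \in \var{Q_1}$ we conclude as in the AND case, while if $?x \in \var{Q_2}$, then---since $?x$ occurs in $P$ outside $Q$---condition~2 forces $?x \in \var{Q_1}$, again reducing to $?x \in \varp P{Q_1} \subseteq S(\rho(Q_1))$.

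The FILTER case $Q = Q_1 \FILTER C$ is the main obstacle. Here $\rho(Q) = \rho(Q_1) \FILTER C$, so the nonemptiness of $\Gamma(\rho(Q))$ hinges entirely on checking $S(\rho(Q_1)) \vdash C$. The induction hypothesis gives the singleton $\{S(\rho(Q_1))\}$ and $\varp P{Q_1} \subseteq S(\rho(Q_1))$. By well-designedness condition~1 every variable mentioned in $C$ belongs to $\var{Q_1}$; since such a variable also occurs in $C$, hence in $P$ outside $Q_1$, it lies in $\varp P{Q_1}$ and therefore in $S(\rho(Q_1))$. For each positive form of $C$---that is $\bound(?x)$, $?x = ?y$, $?x \neq ?y$, $?x = c$, or $?x \neq c$---the relation $\vdash$ merely requires the variables of $C$ to be present in the scheme, so $S(\rho(Q_1)) \vdash C$ follows immediately, whence $S(\rho(Q)) = S(\rho(Q_1))$ and, using $\var Q \subseteq \var{Q_1}$, the containment $\varp PQ \subseteq \varp P{Q_1} \subseteq S(\rho(Q))$ as before.

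The one delicate subcase is $C = \neg\bound(?x)$, for which $\vdash$ demands $?x \notin S(\rho(Q_1))$---contradicting the membership $?x \in S(\rho(Q_1))$ just derived from condition~1. I will resolve this by showing the subcase cannot occur once $\Gamma(Q)$ is assumed nonempty: nonemptiness provides some $S_0 \in \Gamma(Q_1)$ with $?x \notin S_0$, while Lemma~\ref{lemstuck}, applied to $Q_1$, gives $S(\rho(Q_1)) \subseteq S_0$ and hence $?x \notin S(\rho(Q_1))$, directly contradicting $?x \in S(\rho(Q_1))$. Thus a negated-bound filter on a variable of its own subpattern is incompatible with $\Gamma(Q)$ being nonempty for a well-designed pattern, the subcase is vacuous, and the induction closes. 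The crux of the argument is precisely this interplay: conditions~1 and~2 pin every relevant variable into the non-optional part retained by $\rho$, while Lemma~\ref{lemstuck} excludes the degenerate $\neg\bound$ filters that would otherwise break nonemptiness.
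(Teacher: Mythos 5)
Your proof is correct and takes essentially the same approach as the paper's: structural induction over $Q$ with the same four cases, using well-designedness condition~1 to place the filter variables of $C$ in $\varp P{Q_1} \subseteq S(\rho(Q_1))$, condition~2 in the OPT case, and Lemma~\ref{lemstuck} to exclude the $\neg\bound$ subcase. The only cosmetic difference is that you phrase the exclusion of negated bound-constraints as a direct contradiction, whereas the paper observes that no scheme in $\Gamma(Q_1)$ could then satisfy $C$, contradicting nonemptiness of $\Gamma(Q)$ --- the same argument.
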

\begin{proof}
By induction on the height of $Q$.
If $Q$ is a triple pattern $(u, v, w)$, then 
we have $Q = \rho(Q)$ and
$\varp PQ \subseteq \var Q = \{u,v,w\}\cap V = S(Q) =
S(\rho(Q))$ as desired.

If $Q$ is of the form $Q_1 \AND Q_2$, then the definition of
$\Gamma(Q)$ immediately implies that $\Gamma(Q_1)$ and
$\Gamma(Q_2)$ must both be nonempty.  Since $\rho(Q)=\rho(Q_1)
\AND \rho(Q_2)$ we then obtain $S(\rho(Q)) = S(\rho(Q_1)) \cup
S(\rho(Q_2))$.  Any $?x \in \varp PQ$ belongs to $\varp P{Q_1}$
or $\varp P{Q_2}$; we assume the former case as the latter case
is analogous.  By induction, we then have $?x \in S(\rho(Q_1))
\subseteq S(\rho(Q))$ as desired.

If $Q$ is of the form $Q_1 \OPT Q_2$, then $\rho(Q)=\rho(Q_1)$.
Recall that $\Gamma(Q)=\Gamma(Q_1) \cup \Gamma(Q_1 \AND Q_2)$.
If $\Gamma(Q_1)$ is nonempty we obtain by induction that
$\Gamma(\rho(Q_1))=\Gamma(\rho(Q))$ is nonempty; if $\Gamma(Q_1
\AND Q_2)$ is nonempty we obtain $\Gamma(\rho(Q_1))$ nonempty as
in the case for AND\@.  So, $S(\rho(Q))$ exists and is equal to
$S(\rho(Q_1))$.  Now let $?x \in \varp PQ$.  If $?x \in \varp
P{Q_1}$ then $?x \in S(\rho(Q_1))$ by induction.  But if $?x \in
\varp P{Q_2}$, then also $?x \in \varp P{Q_1}$ since $P$ is
well-designed.  Hence we are done with this case.

Finally, let $Q$ be of the form $Q_1 \FILTER C$.  Since
$\Gamma(Q)$ is nonempty, $\Gamma(Q_1)$ is nonempty as well.  To
show that $\Gamma(\rho(Q))$ is nonempty we must show that
$S(\rho(Q_1)) \models C$.  Thereto, consider a variable $?x$
mentioned in $C$.  Since $P$ is well-designed, $?x \in \var
{Q_1}$ and thus $?x \in \varp P{Q_1}$.  By induction we obtain
$?x \in S(\rho(Q_1))$.  By Lemma~\ref{lemstuck}, then also $?x
\in S$ for every $S \in \Gamma(Q_1)$.  In other words, $S \not
\models \neg \bound(?x)$ for every $S \in \Gamma(Q_1)$.  This
rules out the possibility that $C$ is a negated bound-constraint,
since we are given that $\Gamma(Q)$ is nonempty.
On the other hand, this argument also shows that $S(\rho(Q_1))
\models C$ in the other cases, where $C$ is a bound-constraint or an
(constant) (non)equality, as desired.

It remains to show that $\varp PQ \subseteq S(\rho(Q)) =
S(\rho(Q_1))$.  Let $?x \in \varp PQ$.  If $?x \in \var {Q_1}$
the result follows by induction.  If $?x$ occurs in $C$ then,
because $P$ is well-designed, also $?x \in \var {Q_1}$ and thus
we are done.
\end{proof}

We mention in passing an interesting corollary of the
reasoning in the above proof, to
the effect that well-designedness rules out
any nontrivial use of negated bound-constraints:

\begin{corollary}
If $P$ is a union-free well-designed pattern and $Q$ is a
subpattern of $P$ of the form $Q_1
\FILTER {\neg \bound(?x)}$, then $\Gamma(Q)$ is empty, in
particular, $Q$ is unsatisfiable.
\end{corollary}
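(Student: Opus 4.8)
The plan is to reuse, essentially verbatim, the argument already carried out in the FILTER case of the proof of Lemma~\ref{lemrho}, where precisely this configuration was examined. The key point is that well-designedness forces the variable $?x$ named in the condition $\neg\bound(?x)$ to occur in $Q_1$, after which Lemma~\ref{lemrho} together with Lemma~\ref{lemstuck} forces $?x$ into \emph{every} scheme of $\Gamma(Q_1)$. By the definition of $\Gamma$, a scheme $S \in \Gamma(Q_1)$ survives into $\Gamma(Q)$ only if $S \vdash \neg\bound(?x)$, i.e.\ only if $?x \notin S$; since no scheme of $\Gamma(Q_1)$ can meet this, $\Gamma(Q)$ must be empty. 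Unsatisfiability of $Q$ then follows immediately from Lemma~\ref{lemoif}, since an empty $\Gamma(Q)$ leaves no admissible domain for any solution mapping.

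First I would dispose of the degenerate case $\Gamma(Q_1) = \emptyset$: here $\Gamma(Q) \subseteq \Gamma(Q_1)$ is trivially empty, so there is nothing to prove. Assuming therefore that $\Gamma(Q_1)$ is nonempty, I note that $Q_1$ is itself a subpattern of the union-free well-designed pattern $P$, so Lemma~\ref{lemrho} applies to it and yields that $\Gamma(\rho(Q_1))$ is a nonempty singleton $\{S(\rho(Q_1))\}$ with $\varp P{Q_1} \subseteq S(\rho(Q_1))$.

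Next I would establish that $?x \in \varp P{Q_1}$. By the first well-designedness condition, applied to the subpattern $Q = Q_1 \FILTER \neg\bound(?x)$, every variable mentioned in the filter condition occurs in $Q_1$, so in particular $?x \in \var{Q_1}$. Moreover $?x$ occurs in the condition $\neg\bound(?x)$, which sits inside $P$ but outside $Q_1$, so $?x$ indeed occurs in $P$ outside of $Q_1$; hence $?x \in \varp P{Q_1}$ and therefore $?x \in S(\rho(Q_1))$. Applying Lemma~\ref{lemstuck} to $Q_1$ (with the singleton $S(\rho(Q_1)) \in \Gamma(\rho(Q_1))$) gives $S(\rho(Q_1)) \subseteq S$ for every $S \in \Gamma(Q_1)$, so $?x \in S$ for all such $S$. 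Consequently $\Gamma(Q) = \{S \in \Gamma(Q_1) \mid ?x \notin S\} = \emptyset$, and $Q$ is unsatisfiable by Lemma~\ref{lemoif}.

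The whole argument is a direct specialization of the FILTER case of Lemma~\ref{lemrho}, so I expect no genuine obstacle. The only step requiring a moment's care is the membership $?x \in \varp P{Q_1}$: one must explicitly observe that the occurrence of $?x$ inside the filter condition counts as an occurrence in $P$ outside $Q_1$, which is exactly what satisfies the ``occurs outside'' requirement in the definition of $\varp P{Q_1}$. Everything else is routine bookkeeping.
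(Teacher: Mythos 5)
Your proposal is correct and follows essentially the same route as the paper: the corollary is exactly the contrapositive of the FILTER case in the proof of Lemma~\ref{lemrho}, and you reproduce that reasoning faithfully --- well-designedness puts $?x$ in $\var{Q_1}$ and hence in $\varp P{Q_1} \subseteq S(\rho(Q_1))$, Lemma~\ref{lemstuck} then forces $?x$ into every scheme of $\Gamma(Q_1)$, so no scheme satisfies $S \vdash \neg\bound(?x)$, and Lemma~\ref{lemoif} gives unsatisfiability. Your explicit handling of the degenerate case $\Gamma(Q_1) = \emptyset$ and of why the filter occurrence of $?x$ counts as an occurrence of $?x$ in $P$ outside $Q_1$ is careful bookkeeping that the paper leaves implicit, but it is not a different argument.
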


We are now ready to make the final step in the proof of
Theorem~\ref{theorwell}:

\begin{lemma} \label{lemlast}
Let $P$ be a union-free well-designed pattern.  If $\mu \in \sem P$
and $\Gamma(\rho(P))$ is nonempty, then $\mu|_{S(\rho(P))} \in \sem {\rho(P)}$.
\end{lemma}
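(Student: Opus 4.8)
The plan is to prove the statement by induction on the structure of $P$, with the underlying graph $G$ fixed throughout (so that every occurrence of $\sem{\cdot}$ abbreviates $\semm{\cdot}{G}$). Subpatterns of a union-free well-designed pattern are again union-free and well-designed, and the hypothesis that $\Gamma(\rho(P))$ is nonempty propagates to the relevant subpatterns, so the induction hypothesis is always available where needed. The recurring technical ingredient is this: if $\nu \in \sem Q$ for such a subpattern $Q$, then Lemma~\ref{lemoif} gives $\dom\nu \in \Gamma(Q)$, whence Lemma~\ref{lemstuck} yields $S(\rho(Q)) \subseteq \dom\nu$. This inclusion guarantees that restricting any mapping extending $\nu$ down to $S(\rho(Q))$ produces exactly $\nu|_{S(\rho(Q))}$, so the pieces we restrict always agree on shared variables and recombine coherently.

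For the base case $P = (u,v,w)$ we have $\rho(P) = P$ and $S(\rho(P)) = \dom\mu$, so $\mu|_{S(\rho(P))} = \mu \in \sem{\rho(P)}$ trivially. If $P = P_1 \AND P_2$, then $\rho(P) = \rho(P_1) \AND \rho(P_2)$ and $S(\rho(P)) = S(\rho(P_1)) \cup S(\rho(P_2))$; writing $\mu = \mu_1 \cup \mu_2$ with $\mu_i \in \sem{P_i}$ and $\mu_1 \sim \mu_2$, the induction hypothesis gives $\mu_i|_{S(\rho(P_i))} \in \sem{\rho(P_i)}$, and by the inclusion noted above these coincide with $\mu|_{S(\rho(P_i))}$. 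Being restrictions of the single mapping $\mu$ they are compatible, so their union $\mu|_{S(\rho(P))}$ lies in $\sem{\rho(P)}$. If $P = P_1 \OPT P_2$, then $\rho(P) = \rho(P_1)$, and in either disjunct of the OPT semantics the mapping $\mu$ extends some $\mu_1 \in \sem{P_1}$ with $\mu|_{S(\rho(P_1))} = \mu_1|_{S(\rho(P_1))}$ (again by the inclusion); the induction hypothesis applied to $P_1$ then finishes the case.

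The FILTER case $P = P_1 \FILTER C$ is where I expect the real work, since this is the only place where satisfaction of a constraint can be destroyed by passing to a restriction. Here $\rho(P) = \rho(P_1) \FILTER C$ and, since $\Gamma(\rho(P))$ is nonempty, $S(\rho(P)) = S(\rho(P_1))$ and $S(\rho(P_1)) \vdash C$. From $\mu \in \sem P$ we get $\mu \in \sem{P_1}$ and $\mu \models C$, and induction gives $\mu|_{S(\rho(P_1))} \in \sem{\rho(P_1)}$; the remaining obstacle is to show $\mu|_{S(\rho(P_1))} \models C$. The key is that well-designedness (condition~1) forces every variable mentioned in $C$ to occur in $P_1$, hence to lie in $\varp P{P_1}$, and Lemma~\ref{lemrho} then places all these variables inside $S(\rho(P_1))$. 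Consequently the restriction retains every value of $\mu$ relevant to $C$, so satisfaction of any bound-, (non)equality-, or constant-(non)equality-constraint is preserved. The negated-bound case is excluded outright: the same argument puts the variable of $C$ into $S(\rho(P_1))$, which would give $S(\rho(P_1)) \not\vdash C$ and contradict the nonemptiness of $\Gamma(\rho(P))$. This yields $\mu|_{S(\rho(P))} \in \sem{\rho(P)}$ and completes the induction.
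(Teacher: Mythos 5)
Your proof is correct and takes essentially the same route as the paper's: a structural induction with the identical case analysis (triple pattern, AND, OPT, FILTER), using nonemptiness of $\Gamma(\rho(\cdot))$ propagated to subpatterns and the fact that $S(\rho(P_1)) \vdash C$ preserves filter satisfaction under restriction. If anything you are more explicit than the paper on a point it glosses over, namely justifying via Lemmas~\ref{lemoif} and~\ref{lemstuck} that $\mu|_{S(\rho(P_i))}$ equals $\mu_i|_{S(\rho(P_i))}$ in the AND and OPT cases, while your appeal to well-designedness and Lemma~\ref{lemrho} in the FILTER case is sound but redundant, since $S(\rho(P_1)) \vdash C$ already places the variables of any non-negated constraint $C$ inside $S(\rho(P_1))$.
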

\begin{proof}
By induction on the structure of $P$.  If $P$ is a triple pattern, then 
the claim is trivial.

So let $P$ be of the form $P_1 \AND P_2$.  Since $\Gamma(\rho(P))$
is nonempty and $\rho(P)=\rho(P_1) \AND \rho(P_2)$, also
$\Gamma(\rho(P_i))$ is nonempty for $i=1,2$.  Then by induction,
$\mu|_{S(\rho(P_i))} \in \sem {\rho(P_i)}$.  Since they are
restrictions of the same mapping $\mu$, we also have
$\mu|_{S(\rho(P_1))} \sim \mu|_{S(\rho(P_2))}$, so the mapping
$\mu|_{S(\rho(P_1))} \cup \mu|_{S(\rho(P_2))}$ belongs to $\sem
{\rho(P)}$.
Since $S(\rho(P)) = S(\rho(P_1)) \cup S(\rho(P_2))$, we
obtain $\mu|_{S(\rho(P))} \in \sem {\rho(P)}$ as desired.

If $P$ is of the form $P_1 \OPT P_2$, then we have
$\rho(P)=\rho(P_1)$, so we are given that $\Gamma(\rho(P_1))$ is
nonempty.  By induction, $\mu|_{S(\rho(P_1))} \in \sem
{\rho(P_1)} = \sem{\rho(P)}$ as desired.

Finally, if $P$ is of the form $P_1  \FILTER  C$ then by the
nonemptiness of $\Gamma(\rho(P))$ we know that
$S(\rho(P_1)) \models C$ 
and $S(\rho(P)) = S(\rho(P_1))$.  Hence, by induction,
$\mu|_{S(\rho(P_1))} \in \sem {\rho(P_1)}$.  It remains to show that
$\mu|_{S(\rho(P_1))} \models C$, but this
follows immediately because $\mu \models C$ and $S(\rho(P_1))
\models C$.
\end{proof}

With the above lemmas in hand,
the only-if direction of Theorem~\ref{theorwell} can now be 
argued as follows.  Since $P$ is satisfiable, $\Gamma(P)$ is
nonempty by Lemma~\ref{lemoif}.  By Lemma~\ref{lemrho} applied to
$Q=P$, also $\Gamma(\rho(P))$ is nonempty.  Since $P$ is
satisfiable, there exist $G$ and $\mu$ such that $\mu \in \sem
P$.  Now applying Lemma~\ref{lemlast} yields that $\sem
{\rho(P)}$ is nonempty.  We conclude that $\rho(P)$ is
satisfiable.

\section{Experimental evaluation} \label{secexp}

We want to evaluate experimentally the positive results
presented so far:
\begin{enumerate}
\item
Wrong literal reduction
(Proposition~\ref{propeasy});
\item
Satisfiability checking for the two fragments $\frageq$ and $\fragneq$ 
by computing $\Gamma(P)$ (Theorem~\ref{theordecidable});
\item
Satisifiability checking for well-designed patterns, by reduction
to AF-patterns (Proposition~\ref{propaf} and
Theorem~\ref{theorwell}).
\end{enumerate}

Our experiments follow up on those reported earlier by the third
author and Vansummeren \cite{pica_realsparql}.  As test datasets
of real-life SPARQL queries, we use logs of the SPARQL endpoint
for DBpedia, available at
\url{ftp://download.openlinksw.com/support/dbpedia/}.
This data source contains the ``query dumps'' from the year 2012,
divided into 14 logfiles.  Out of these we chose
the three logs 20120913, 20120929 and 20121031 to obtain a span
of roughly three months; we then took a
sample of $100\,000$ queries from each of them.
A typical query in the log has size between 75 and 125 (size
measured as number of nodes in the syntax tree).  About 10\% of
the queries in each log is not usable because they have syntax
errors or because they use features not covered by our analysis.

The implementation of the
       tests was done in Java~7 under Windows~7, on an Intel
       Core~2 Duo SU94000 processor (1.40GHz, 800MHz, 3MB) with
       3GB of memory (SDRAM DDR3 at 1067MHz).

Our tests measure the time needed to perform the analyses of
SPARQL queries presented above.  The timings are averaged over
all queries in a log, and each experiment is repeated five times
to smooth out accidental quirks of the operating system.
Although we give absolute timings, the main emphasis is on the
percentage of the time needed to analyse a query, with respect to
the time needed simply to read and parse that query.  If this
percentage is small this demonstrates efficient, linear time
complexity in practice.  It will turn out that this is indeed
achieved by our experiments, as shown in Table~\ref{tabwl}.

In the following subsections we discuss the results in more
detail.

\subsection{Wrong literal reduction}

Testing for and removing triple patterns with wrong literals in a
pattern $P$ is performed by the reduction $\lambda(P)$ defined in
the Appendix.  From the definition of
$\lambda(P)$ it is clear that it can be computed by a single
bottom-up traversal of $P$ and this is indeed borne out by our
experiments.  Table~\ref{tabwl} shows that on average, wrong-literal
reduction takes between 3 and 5\% of the time needed to read and
parse the input.

\begin{table}
\caption{Timings of experiments (averaged over five
repeats).  Times are in ms.  Baseline is
time to read and parse $1000\,000$ queries; WL stands for baseline
plus time for wrong-literal reduction.
$\Gamma(P)$ stands for WL
plus time for computing $\Gamma(P)$.
AF stands for baseline, plus testing
well-designedness, plus doing AF-reduction and testing
satisfiability (Proposition~\ref{propaf}).  The percentages show
the increases relative to the baseline.}
\label{tabwl}
\begin{center}
\begin{tabular}{rrrrrrrr}
logfile & baseline & WL & & $\Gamma(P)$ & & AF & \\
\hline
20120913 & $39\,422$ & $41\,254$ & $5\%$ & $44\,395$ & $8\%$ &
    $48\,329$ & $10\%$ \\
20120929 & $34\,281$ & $35\,868$ & $5\%$ & $38\,102$ & $7\%$ &
    $41\,087$ & $9\%$ \\
20121031 & $32\,286$ & $33\,186$ & $3\%$ & $34\,419$ & $4\%$ &
     $36\,993$ & $8\%$
\end{tabular}
\end{center}
\end{table}

Interestingly, some real-life queries with literals in the wrong
position were indeed found; one example is the following:

\begin{small}
\begin{verbatim}
SELECT DISTINCT *
WHERE { 49  dbpedia-owl:wikiPageRedirects  ?redirectLink .}
\end{verbatim}
\end{small}

\subsection{Computing $\Gamma(P)$}

In Section~\ref{secsat} we have seen that satisfiability for the
decidable fragments can be tested by computing
$\Gamma(P)$, but that the problem is NP-complete.  Intuitively,
the problem is intractable because $\Gamma(P)$ may be of size exponential
in the size of $P$.  This actually occurs in real
life; a common SPARQL query pattern is to use many nested
OPTIONAL operators to gather additional information that is not
strictly required by the query but may or may not be present.
We found in our experiments queries with up to 50 nested
OPT operators, which naively would lead to a $\Gamma(P)$ of size
$2^{50}$.  A shortened example of such a query is shown in
Figure~\ref{figexp}.

\begin{figure}
\begin{footnotesize}
\begin{verbatim}
SELECT DISTINCT *
WHERE {
?s a <http://dbpedia.org/ontology/EducationalInstitution>,
<http://dbpedia.org/ontology/University> .
?s <http://dbpedia.org/ontology/country> <http://dbpedia.org/resource/Brazil> .
OPTIONAL {?s  <http://dbpedia.org/ontology/affiliation> ?ontology_affiliation .}
OPTIONAL {?s  <http://dbpedia.org/ontology/abstract> ?ontology_abstract .}
OPTIONAL {?s  <http://dbpedia.org/ontology/campus> ?ontology_campus .}
OPTIONAL {?s  <http://dbpedia.org/ontology/chairman> ?ontology_chairman .}
OPTIONAL {?s  <http://dbpedia.org/ontology/city> ?ontology_city .}
OPTIONAL {?s  <http://dbpedia.org/ontology/country> ?ontology_country .}
OPTIONAL {?s  <http://dbpedia.org/ontology/dean> ?ontology_dean .}
OPTIONAL {?s  <http://dbpedia.org/ontology/endowment> ?ontology_endowment .}
OPTIONAL {?s  <http://dbpedia.org/ontology/facultySize> ?ontology_facultySize .}
OPTIONAL {?s  <http://dbpedia.org/ontology/formerName> ?ontology_formerName .}
OPTIONAL {?s  <http://dbpedia.org/ontology/head> ?ontology_head .}
OPTIONAL {?s  <http://dbpedia.org/ontology/mascot> ?ontology_mascot .}
OPTIONAL {?s  <http://dbpedia.org/ontology/motto> ?ontology_motto .}
OPTIONAL {?s  <http://dbpedia.org/ontology/president> ?ontology_president .}
OPTIONAL {?s  <http://dbpedia.org/ontology/principal> ?ontology_principal .}
OPTIONAL {?s  <http://dbpedia.org/ontology/province> ?ontology_province .}
OPTIONAL {?s  <http://dbpedia.org/ontology/rector> ?ontology_rector .}
OPTIONAL {?s  <http://dbpedia.org/ontology/sport> ?ontology_sport .}
OPTIONAL {?s  <http://dbpedia.org/ontology/state> ?ontology_state .}
OPTIONAL {?s  <http://dbpedia.org/property/acronym> ?property_acronym .}
OPTIONAL {?s  <http://dbpedia.org/property/address> ?property_address .}
OPTIONAL {?s  <http://www.w3.org/2003/01/geo/wgs84_pos#lat> ?property_lat .}
OPTIONAL {?s  <http://www.w3.org/2003/01/geo/wgs84_pos#long> ?property_long .}
OPTIONAL {?s  <http://dbpedia.org/property/established> ?property_established .}
OPTIONAL {?s  <http://dbpedia.org/ontology/logo> ?ontology_logo .}
OPTIONAL {?s  <http://dbpedia.org/property/website> ?property_website .}
OPTIONAL {?s  <http://dbpedia.org/property/location> ?property_location .}
FILTER ( langMatches(lang(?ontology_abstract), "es") ||
langMatches(lang(?ontology_abstract), "en") )
FILTER ( langMatches(lang(?ontology_motto), "es") ||
langMatches(lang(?ontology_motto), "en") )
}
\end{verbatim}
\end{footnotesize}
\caption{A real-life query with many nested OPTIONAL operators,
retrieving as much information as possible about universities in
Brazil.}
\label{figexp}
\end{figure}

In practice, however, the blowup of $\Gamma(P)$ can be avoided as
follows.  Recall that Theorem~\ref{theordecidable} states that
$P$ is satisfiable if and only if $\Gamma(P)$ is nonempty.  The
elements of $\Gamma(P)$ are sets of variables.  Looking at the
definition of $\Gamma(P)$, a set may be removed from $\Gamma(P)$
only by the application of a FILTER\@.  Hence, only variables that are
mentioned in FILTER conditions can influence the emptiness
of $\Gamma(P)$; other variables can be ignored.  For example, in the query in
Figure~\ref{figexp}, only two variables appear in a filter,
namely {\small \texttt{?ontology\_abstract}} and
{\small \texttt{?ontology\_motto}}, so that the
maximal size of $\Gamma(P)$ is reduced to $2^2$.

In our experiments, it turns out that typically few variables are
involved in filter conditions.  Hence, the above strategy works
well in practice.

Another practical issue is that, in this paper, we have only
considered filter conditions that are bound checks, equalities,
and constant-equalities, possibly negated.  In practice,
filter conditions typically apply built-in SPARQL predicates such
as the predicate \texttt{langMatches} in Figure~\ref{figexp}.
For the experimental purpose of testing the practicality of computing
$\Gamma(P)$, however, such predicates can simply be treated as bound
checks.  In this way we can apply our experiments to 70\% of the queries
in the testfiles.

With the above practical adaptations, our experiments show that
computing $\Gamma(P)$ is efficient: Table~\ref{tabwl} shows
that it requires, on average, between 4 and 8\% of the time
needed to read and parse the input, and these timings even
include the wrong-literal reduction.

\subsection{Satisfiability testing for well-designed patterns}

In Section~\ref{secwell} we have seen that testing satisfiability
of a well-designed pattern can be done by testing satisfiability of
the AF-reduction (Theorem~\ref{theorwell}).  The latter can be
done by testing nonemptiness of $\Gamma(P)$ and testing
consistency of the filter conditions (Proposition~\ref{propaf}).

Computing the AF-reduction can be done by a simple bottom-up
traversal of the pattern.  Moreover, for an AF-pattern $P$,
computing $\Gamma(P)$ poses no problems since it is either empty
or a singleton.  As far as testing consistency of filter
conditions is concerned, our experiments yield a rather baffling
observation: almost all well-designed patterns in the test sets
have no filters at all.  We cannot explain this phenomenon, but
it implies that we have not been able to test the performance of
the consistency checks on real-life SPARQL queries.

Anyhow, Table~\ref{tabwl} shows that doing the entire
analysis of wrong-literal reduction, testing well-designedness,
AF-reduction, computing $\Gamma(P)$, and consistency checking (in
the few cases where the latter was necessary), incurs at most a 10\%
increase relative to reading and parsing the input.

\subsection{Scalability}

The experiments described above were run on sets of $100\,000$
queries each.  We also did a modest scaling experiment where we varied the
number of queries from $5\,000$ to $200\,000$.  Table~\ref{tabscale}
shows that the performance scales linearly.

\begin{table}
\caption{Scalability experiment (times in ms).
Timings clearly scale linearly for increasing input size.}
\label{tabscale}
\begin{center}
\begin{tabular}{rrrrrrl}
input size & $200\,000$ & $100\,000$ & $50\,000$ & $10\,000$ &
$5\,000$ & Pearson coeficient \\
\hline
baseline & $74\,168$ & $39\,422$ &$21\,315$ &$3\,596$ & $1\,851$ &$0.999924005$ \\
WL & $77\,800$&$41\,253$&$21\,876$&$3\,762$&$1\,942$&$0.999989454$ \\
$\Gamma(P)$& $81\,730$&$44\,395$&$23\,552$&$4\,016$&$2\,036$&$0.999900948$
\\
AF& $91\,470$&$48\,329$&$26\,023$&$4\,463$&$2\,254$&$0.999044542$
\end{tabular}
\end{center}
\end{table}

\section{Extension to SPARQL 1.1} \label{sec1.1}

As already mentioned in the Introduction, SPARQL~1.0 has been
extended to SPARQL~1.1 with a number of new operators for
building patterns.  The main new features are property paths;
grouping and aggregates; BIND; VALUES; MINUS;
EXISTS and NOT EXISTS-subqueries; and
SELECT\@.  A complete analysis of SPARQL~1.1 goes
beyond the scope of the present paper.  Nevertheless, in this
section, we briefly discuss how our results may be extended to
this new setting.

Property paths provide a form of regular path querying over
graphs.  This aspect of graph querying has already been
extensively investigated, including questions of satisfiability
and other kinds of static analysis such as query containment
\cite{vrgoc_containment,krrv_sparqlpp}.  Therefore we do not discuss
property paths any further here.

The SPARQL~1.1 features that we discuss can be grouped in two
categories: those that cause undecidability, and those that are
harmless as far as satisfiability is concerned.  We begin with
the harmless category.

\subsection{SELECT operator and EXISTS-subqueries}

SPARQL~1.1 allows patterns of the form $\SELECT_S P$, where $S$
is a finite set of variables and $P$ is a pattern.  The semantics
is that of projection: solution mappings are restricted to the
variables listed in $S$.  Formally, we define $$ \sem {\SELECT_S
P} = \{\mu|_{S \cap \dom \mu} \mid \mu \in \sem P\}. $$  

This feature in itself does not influence the satisfiability of
patterns.  Indeed, patterns extended with SELECT
operators can be reduced to patterns without said operators.  The
reduction amounts simply to rename the variables that are
projected out by fresh variables that are not used anywhere else
in the pattern; then the SELECT operators themselves can be
removed.  The resulting, SELECT-free, pattern is equivalent to
the original one if we omit the fresly introduced variables from
the solution mappings in the final result.  In particular,
the two patterns are equisatisfiable.

\begin{example}
Rather than giving the formal definition of SELECT-reduction and
formally stating and proving the equivalence, we give an example.
Consider the pattern $P$:
$$ (c,p,?x) \OPT ( (?x,p,?y) \AND
\SELECT_{?y} (?y,q,?z) \AND 
\SELECT_{?y} (?y,r,?z)
) $$
Renaming projected-out variables by fresh variables and omitting
the SELECT operators yields the
following pattern $P'$:
$$ (c,p,?x) \OPT ( (?x,p,?y) \AND
(?y,q,?z_1) \AND (?y,r,?z_2)
) $$
Pattern $P'$ is equivalent to $P$ in the sense that for any graph
$G$, we have $\sem P = \{\hat \mu \mid \mu \in \sem {P'}\}$,
where $\hat \mu$ denotes the mapping obtained from $\mu$ by
omitting the values for $?z_1$ and $?z_2$ (if at all present in
$\dom \mu$).
\qed
\end{example}

Now that we know how to handle SELECT operators, we can also
handle EXISTS-subqueries.  Indeed, a pattern $P \, \FILTER \, {\EXISTS
(Q)}$ (with the obvious SQL-like semantics) is equivalent to
$\SELECT_{\var P}(P \AND Q)$.

\subsection{Features leading to undecidability}

In Section~\ref{secund} we have seen that as soon as one can
express the union, composition and difference of binary
relations, the satisfiability problem becomes undecidable.  Since
union and composition are readily expressed in basic SPARQL
($\UNION$ and $\AND$), the key lies in the expressibility of the
difference operator.  In this subsection we will see that various
new features of SPARQL~1.1 indeed allow expressing difference.

\paragraph{MINUS operator and NOT EXISTS subqueries}
Any of these two features can quite obviously be used
to express difference, so we do not dwell on them any further.

\paragraph{Grouping and aggregates}  A known trick for expressing
difference using grouping and counting \cite{sqlforsmarties} can be
emulated in the extension of SPARQL~1.0 with grouping.  We
illustrate the technique with an example.

Consider the query $(?x,p,?y) \MINUS (?x,q,?y)$ asking for all
pairs $(a,b)$ such that $(a,p,b)$ holds but $(a,q,b)$ does not.
We can express this query (with the obvious SQL-like semantics) as follows:
\begin{tabbing}
$\SELECT_{?x,?y} \bigl ( 
(?x,p,?y) \OPT ((?x,q,?y) \AND (?xx,p,?yy)) \bigr )$ \\
$\mathrm{{GROUP\ BY}}\ {?x,?y}$ \\
$\mathrm{HAVING} \ {\mathrm{count}(?xx) = 0}$
\end{tabbing}
Note that this technique of looking for the $(?x,?y)$ groups with
a zero count for $?xx$ is very similar to the technique used to
express difference using a negated bound constraint (seen in the
proof of Lemma~\ref{reductionlemma}).

\paragraph{BIND and VALUES}

We have seen in Section~\ref{secfrageqc} that allowing constant
equalities in filter constraints allows us to emulate the
difference operator.  Two mechanisms introduced in SPARQL~1.1, BIND and
VALUES, allow the introduction of constants in solution mappings.
Together with equality constraints this allows us to express
constant equalities, and hence, difference.

Specifically, using VALUES, we can express $P \, \FILTER \, {?x=c}$
as $$ \SELECT_{\var P}(P \AND \VALUES_{?x}(c)). $$  Using BIND, it
can be expressed as $$ \SELECT_{\var P}((P \BIND_{?x'}(c)) \FILTER
{?x = {?x'}}) $$ where $?x'$ is a fresh variable.  Note the use of
SELECT, which, however, does not influence satisfiability as
discussed above.  We conclude that SPARQL($=$) extended with
BIND, or SPARQL($=$) extended with VALUES, have an undecidable
satisfiability problem.

\section{Conclusion} \label{seconcl}

The results of this paper may be summarized by saying that, as
long as the kinds of constraints allowed in filter conditions
cannot be combined to yield inconsistent sets of constraints,
satisfiability for SPARQL patterns is decidable; otherwise, the
problem is undecidable.  Moreover, for well-designed patterns,
satisfiability is decidable as well.  All our positive results
yield straightforward bottom-up syntactic checks that can be 
implemented efficiently in practice.

We thus have attempted to paint a rather complete picture of the
satisfiability problem for SPARQL~1.0\@.  Of course, satisfiability
is only the most basic automated reasoning task. One may now
move on to more complex tasks such as equivalence, implication,
containment, or query answering over ontologies.
Indeed, investigations along this line for limited
fragments of SPARQL are already happening
\cite{pp_sparqlcontainment,fransen_sparqlcontain,kg_sparqlontology,ox_cgiortp}
and we hope that our work may
serve to provide some additional grounding to these investigations.

We also note that in query optimization it is standard to check
for satisfiability of subexpressions, to avoid executing useless
code.  Some specific works on SPARQL query optimization
\cite{sequeda_ultrawrap,groppe_sparqlosers} do mention that
inconsistent constraints can cause unsatisfiability, but they
have not provided sound and complete characterizations of
satisfiability, like we have offered in this paper.  Thus, our
results will be useful in this direction as well.

\section*{Acknowledgment}

We thank the anonymous referees for their critical comments on a
previous version of this paper, which encouraged us to
significantly improve the paper.

\bibliographystyle{alpha}
\bibliography{database}

\appendix

\section*{Appendix}

Literals in the wrong place in triple patterns are easily dealt
with in the following manner.
We define the \emph{wrong-literal reduction} of a pattern $P$, denoted by
$\lambda(P)$, as a set that is either empty or is a singleton
containing a single pattern $P'$:
\begin{itemize}
\item
If $P$ is a triple pattern $(u,v,w)$ and $u$ is a literal, then
$\lambda(P):=\emptyset$; else $\lambda(P):=\{P\}$.
\item
$\lambda(P_1 \UNION P_2) := \lambda(P_1) \cup \lambda(P_2)$ if
$\lambda(P_1)$ or $\lambda(P_2)$ is empty;
\item
$\lambda(P_1 \UNION P_2) := \{P_1' \UNION P_2' \mid P_1' \in
\lambda(P_1)$ and $P_2' \in \lambda(P_2)\}$ otherwise.
\item
$\lambda(P_1 \AND P_2) := \{P_1' \AND P_2' \mid P_1' \in
\lambda(P_1)$ and $P_2' \in \lambda(P_2)\}$.
\item
$\lambda(P_1 \OPT P_2) := \emptyset$ if $\lambda(P_1)$ is empty;
\item
$\lambda(P_1 \OPT P_2) := \lambda(P_1)$ if $\lambda(P_2)$ is empty but
$\lambda(P_1)$ is nonempty;
\item
$\lambda(P_1 \OPT P_2) := \{P_1' \OPT P_2' \mid P_1' \in
\lambda(P_1)$ and $P_2' \in \lambda(P_2)\}$ otherwise.
\item
$\lambda(P_1 \FILTER C) := \{P_1' \FILTER C \mid P_1' \in
\lambda(P_1)\}$.
\end{itemize}

Note that the wrong-literal reduction never has a literal in the
subject position of a triple pattern.  The next proposition shows
that, as far as satisfiability checking is concerned, we may
always perform the wrong-literal reduction.

\begin{proposition} \label{propeasy}
Let $P$ be a pattern. If $\lambda(P)$ is empty then $P$ is unsatisfiable; if
$\lambda(P)=\{P'\}$
then $P$ and $P'$ are equivalent, i.e., $\sem P  = \sem {P'}$
for every RDF graph $G$.  Moreover, if $\lambda(P)=\{P'\}$ then $P'$
does not contain any triple pattern $(u,v,w)$ where $u$ is a
literal.
\end{proposition}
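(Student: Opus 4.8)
The plan is to prove all three assertions simultaneously by a single structural induction on $P$. The induction hypothesis I would maintain is threefold: (i) $\lambda(P)$ is either empty or a singleton; (ii) if $\lambda(P)=\{P'\}$, then $P'$ contains no triple pattern with a literal in subject position; and (iii) the semantic dichotomy, namely that $\lambda(P)=\emptyset$ implies $\sem P=\emptyset$ for every $G$, while $\lambda(P)=\{P'\}$ implies $\sem P=\sem{P'}$ for every $G$. Parts (i) and (ii) are bookkeeping that falls out of the same induction, and (iii) is the substantive claim from which both "$P$ unsatisfiable" and "$P$ and $P'$ equivalent" follow.

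For the base case, let $P=(u,v,w)$ be a triple pattern. If $u$ is a literal then $\lambda(P)=\emptyset$; here the key observation is that $\sem P=\emptyset$ for every $G$, because by Convention~\ref{covve} we have $\mu(u)=u$, a literal, whereas no RDF triple of $G$ can carry a literal in its first coordinate (RDF triples lie in $(I\cup B)\times I\times U$). If $u$ is not a literal then $\lambda(P)=\{P\}$ with $P'=P$, which trivially satisfies (ii) and $\sem P=\sem{P'}$.

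For the inductive step I would match each clause in the definition of $\lambda$ against the corresponding semantic operator, using that the operators act as expected on the empty set of mappings. The UNION, AND, and FILTER cases are routine: union with $\emptyset$ simply discards the dead branch, whereas $\Omega\Join\emptyset=\emptyset$ and filtering $\emptyset$ yields $\emptyset$, which is exactly why $\lambda$ propagates emptiness through AND and FILTER but only drops an empty operand in UNION. In each surviving case one rewrites $\sem P$ using the hypotheses $\sem{P_i}=\sem{P_i'}$ and reads off equality with $\sem{P'}$.

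I expect the OPT case to be the one delicate point. When $\lambda(P_1)=\emptyset$, emptiness of $\sem{P_1}$ forces both terms of $(\sem{P_1}\Join\sem{P_2})\cup(\sem{P_1}\smallsetminus\sem{P_2})$ to vanish, so $\lambda(P)=\emptyset$ is sound. The subtle subcase is $\lambda(P_2)=\emptyset$ while $\lambda(P_1)=\{P_1'\}$, where $\lambda$ retains only $P_1'$: here the hypothesis gives $\sem{P_2}=\emptyset$ for every $G$, and then the join term is empty while $\sem{P_1}\smallsetminus\emptyset=\sem{P_1}$, so $\sem P=\sem{P_1}=\sem{P_1'}$. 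This asymmetry---an unsatisfiable right operand may be deleted under OPT but not under AND---is the crux, and it is precisely what the definition of $\lambda$ encodes. Finally, invariant (ii) survives every step because no inductive clause introduces a new triple pattern, so $P'$ inherits the literal-freeness of its reduced sub-patterns, which settles the "moreover" claim.
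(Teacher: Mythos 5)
Your proposal is correct and follows essentially the same route as the paper's proof: a structural induction whose base case uses the fact that RDF triples cannot have literals in first position, whose UNION/AND/FILTER cases are routine propagation, and whose OPT case splits into the same three subcases, with the key point being that an unsatisfiable right operand makes $\sem{P_1 \OPT P_2} = \sem{P_1}$. Your write-up is merely more explicit than the paper's about the underlying set-of-mappings identities (e.g.\ $\Omega \Join \emptyset = \emptyset$ and $\Omega \smallsetminus \emptyset = \Omega$), which the paper leaves implicit.
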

\begin{proof}
Assume $P$ is a triple pattern $(u,v,w)$ and $u$ is a literal,
so that $\lambda(P)=\emptyset$.  Since $u$ is a constant, 
$\mu(u)$ equals the literal $u$ for every solution mapping $\mu$.
Since no triple in an RDF graph can have a literal in its
first position, $\sem P$ is empty for every RDF graph $G$, i.e.,
$P$ is unsatisfiable.
If $u$ is not a literal, $\lambda(P)=\{P\}$ and the claims of the
Proposition are trivial.

If $P$ is of the form $P_1\UNION P_2$, or $P_1 \AND P_2$, or $P_1
\FILTER C$, the claims of the Proposition
follow straightforwardly by induction.

If $P$ is of the form $P_1\, \OPT\, P_2$, there are three cases
to consider.
\begin{itemize}
\item If $\lambda(P_1)$ is empty then so is $\lambda(P)$. In this
case, by induction, $P_1$ is unsatisfiable, whence so is $P$.
\item
If $\lambda(P_1) = \{P_1'\}$ is nonempty but $\lambda(P_2)$ is empty,
then $\lambda(P)=\{P_1'\}$.  By induction,
$P_2$ is unsatisfiable.  Hence, $P$ is equivalent to
$P_1$, which in turn is equivalent to $P_1'$ by induction.
That $P_1'$ does not contain any triple pattern with a literal in
first position again follows by induction.
\item
If $\lambda(P_1)=\{P_1'\}$ and $\lambda(P_2)=\{P_2'\}$ are both
nonempty, then $\lambda(P)=P_1' \OPT P_2'$. By induction,
$P_1$ is equivalent to $P_1'$ and so is $P_2$ to $P_2'$.  Hence,
$P$ is equivalent to $P_1' \OPT P_2'$ as desired.  By induction,
neither $P_1'$ nor $P_2'$ contain any triple pattern with a
literal in first position, so neither does $P_1' \OPT P_2'$.
\end{itemize}
\end{proof}
\end{document}